  \providecommand*\input@path{}
  \newcommand\addinputpath[1]{
  \expandafter\def\expandafter\input@path
  \expandafter{\input@path{#1}}}
  \newcommand{\Rmnum}[1]{\expandafter\@slowromancap\romannumeral #1@}
  \newcommand{\thickhline}{%
    \noalign {\ifnum 0=`}\fi \hrule height .6pt
    \futurelet \reserved@a \@xhline
  }
  \def\email#1{\emailname: #1}
  \def\emailname{E-mail}%
  \def\acknowledgement{\par\addvspace{17pt}\small\rmfamily
  \trivlist\if!\ackname!\item[]\else
  \item[\hskip\labelsep
  {\bfseries\ackname}]\fi}
  \newenvironment{acknowledgements}{\begin{acknowledgement}}
  {\end{acknowledgement}}
  \def\trans@english{\switcht@albion}
  \def\switcht@albion{\def\ackname{Acknowledgements}%
  }\switcht@albion
\def\keywordname{{\bfseries Keywords}}%
\def\keywords#1{\par\addvspace\medskipamount{\rightskip=0pt plus1cm
\def\and{\ifhmode\unskip\nobreak\fi\ $\cdot$
}\keywordname\enspace\ignorespaces#1\par}}
\newcommand{\tl}{\textnormal}
\newcommand{\rank}{\textrm{Rank}}
\newcommand{\supp}{\textnormal{Supp}}
\newcommand{\gaubinom}[2]{\begin{bmatrix}#1\\#2\end{bmatrix}_q}
\newtheorem{theorem}{Theorem}
\newtheorem{proposition}{Proposition}
\theoremstyle{remark}
\newtheorem{definition}[theorem]{Definition}
\newtheorem{remark}{Remark}
\date{}
\begin{document}

\title{Two Public Key Cryptosystems Based on Expanded Gabidulin Codes %\thanks{Grants or other notes
%about the article that should go on the front page should be
%placed here. General acknowledgments should be placed at the end of the article.}
}

\author{Wenshuo Guo\thanks{Wenshuo Guo is with the Chern Institute of Mathematics and LPMC, Nankai University, Tianjin 300071, China. \email{ws\_guo@mail.nankai.edu.cn}} 
\ and 
Fang-Wei Fu\thanks{Fang-Wei Fu is with the Chern Institute of Mathematics and LPMC, Nankai University, Tianjin 300071, China. \email{fwfu@nankai.edu.cn}}}

\maketitle

\begin{abstract}
This paper presents two public key cryptosystems based on the so-called expanded Gabidulin codes, which are constructed by expanding Gabidulin codes over the base field. Exploiting the fast decoder of Gabidulin codes, we propose an efficient algorithm to decode these new codes when the noise vector satisfies a certain condition. Additionally, these new codes have an excellent error-correcting capability because of the optimality of their parent Gabidulin codes. With different masking techniques, we give two encryption schemes by using expanded Gabidulin codes in the McEliece setting. Being constructed over the base field, these two proposals can prevent the existing structural attacks using the Frobenius map. Based on the distinguisher for Gabidulin codes, we propose a distinguisher for expanded Gabidulin codes by introducing the concept of the so-called twisted Frobenius power. It turns out that the public code in our proposals seems indistinguishable from random codes under this distinguisher. Furthermore, our proposals have an obvious advantage in public key representation without using the cyclic or quasi-cyclic structure compared to some other code-based cryptosystems. To achieve the security of 256 bits, for instance, a public key size of 37583 bytes is enough for our first proposal, while around 1044992 bytes are needed for Classic McEliece selected as a candidate of the third round of the NIST PQC project.
\end{abstract}
\keywords{Post-quantum cryptography \and Code-based cryptography \and Rank metric codes \and Expanded Gabidulin codes}

\section{Introduction}
Over the past decades, cryptosystems based on coding theory have been drawing more and more attention due to the rapid development of quantum computers. The first code-based cryptosystem, known as McEliece cryptosystem \cite{McEliece1978public} based on Goppa codes, was proposed by McEliece in 1978. The principle for McEliece's proposal is to first encode the plaintext with a random generator matrix of the distorted Goppa code and then add some random errors. Since then various studies\cite{lee1988observation,canteaut2000cryptanalysis,loidreau2001weak,faugere2013distinguisher,couvreur2016polynomial} have been made to investigate the security of McEliece cryptosystem.

In addition to potential resistance against quantum computer attacks, McEliece cryptosystem has pretty fast encryption and decryption procedures. However, this system has never been used in practice due to the large public key size. To overcome this problem, some variants were proposed one after another. For instance, the authors in \cite{loidreau2000strengthening} proposed to use the automorphism groups of Goppa codes to build decodable error patterns of larger weight, which greatly enhances the system against decoding attacks. By doing this, smaller codes are allowed in the design of encryption schemes to reduce the public key size. Unfortunately, this variant was shown to be vulnerable against the chosen-plaintext attacks proposed in \cite{kobara2003one}. In \cite{misoczki2009compact}, the authors proposed the family of quasi-dyadic Goppa codes, which admit a very compact representation of parity-check or generator matrix, for efficiently designing syndrome-based cryptosystems. However, the authors in \cite{faugere2010algebraic} mounted an efficient key-recovery attack against this variant for almost all the proposed parameters.

Besides endowing Goppa codes with some special structures, replacing Goppa codes with other families of codes is another approach to shorten the public keys. For instance, Niederreiter \cite{Niederreiter1986Knapsack} introduced a knapsack-type cryptosystem based on generalized Reed-Solomon (GRS) codes. In Niederreiter's proposal, the message sender first converts the plaintext into a vector of fixed weight and then multiplies it with a parity-check matrix of the public code. The advantage of GRS codes consists in their optimal error-correcting capability, which enables us to reduce the public key size by exploiting codes with smaller parameters. However, this variant was proved to be insecure by Sidelnikov and Shestakov in \cite{Sidelnikov1992On} for the reason that GRS codes are highly structured.  But if we use Goppa codes in the Niederreiter setting, it was proved to be equivalent to McEliece cryptosystem in terms of security \cite{xinmei1994on}. To strengthen resistance against structural attacks, the authors in \cite{Baldi2016Enhanced} performed a column-mixing transformation instead of a simple permutation to the underlying GRS code. According to their analysis, this variant could prevent some well-known attacks, such as Sidelnikov-Shestakov attack \cite{Sidelnikov1992On} and Wieschebrink's attack \cite{Wieschebrink2010Crypt}. However, in\cite{Couvreur2014Distinguisher} the authors presented a polynomial key-recovery attack in some cases. Although one can adjust the parameters to prevent such an attack, it would introduce some other problems, such as the decryption complexity increasing dramatically and a higher request of error-correcting capability for the underlying code. In \cite{khathuria2021encryption}, the authors introduced the concept of expanded GRS codes and designed an encryption scheme by using these codes in the Niederreiter setting. However, this scheme was already partially broken by Couvreur and Lequesne in \cite{couvreur2022onthe} for the case of $\lambda=2$ and $m=3$.

In \cite{gabidulin1985theory} Gabidulin introduced a new family of rank metric codes, known as Gabidulin codes, which can be seen as an analogue of GRS codes in the rank metric. The particular appeal of rank metric codes is that the general decoding problem is much more difficult than that of Hamming metric codes \cite{chabaud1996crypt,ourivski2002new}. This inspires us to obtain a much smaller public key size by building cryptosystems in the rank metric. In \cite{gabidulin1991ideals} the authors proposed the GPT cryptosystem by using Gabidulin codes in the McEliece setting, which requires a public key size of only a few thousand bits for the security of $100$ bits. Similar to the cryptosystems based on GRS codes, the GPT cryptosystem and some of its variants \cite{gabidulin2008attacks,gabidulin2009improving,loidreau2010designing,rashwan2010smart} have been subjected to many structural attacks \cite{Gibson1996The,2008Structural,horlemann2018extension,Otmani2018Improved}. Faure and Loidreau proposed another cryptosystem \cite{faure2005new} that is quite different from the GPT proposal. The security of this scheme closely relates to the intractability of reconstructing linearized polynomials. Until the work in \cite{gaborit2018polynomial}, the Faure-Loidreau scheme had never been severely attacked. In \cite{loidreau2017new} Loidreau designed another rank metric based cryptosystem in the McEliece setting, where a column-mixing transformation was imposed to the underlying code with the inverse of an invertible matrix whose entries are taken from an $\mathbb{F}_q$-subspace of $\mathbb{F}_{q^m}$ of dimension $\lambda$. Loidreau claimed that their proposal could prevent all the existing structural attacks. However, this claim was proved to be invalidated by the authors in \cite{coggia2020security} for the case of $\lambda=2$ and the code rate being greater than $1/2$. Not long after this, the author in \cite{2020Extending} extended this attack to the case of $\lambda=3$.

In \cite{berger2017gabidulin}, Berger et al. used the so-called Gabidulin matrix codes to design cryptosystems, which can be seen as a rank metric counterpart of expanded GRS codes. Our work in the present paper is inspired by the variants \cite{khathuria2021encryption,berger2017gabidulin} and uses the so-called expanded Gabidulin codes as the underlying code. Benefitting from the optimality of their parent Gabidulin codes, these new codes have excellent capability of correcting Hamming errors. This enables us to reduce the public key size by exploiting smaller codes. Because of our proposals being constructed over the base field, all the existing structural attacks based on the Frobenius map do not work any longer.

The rest of this paper is arranged as follows. In Section \ref{section2}, notations and some basic concepts about rank metric codes and Gabidulin codes will be given. In Section \ref{section3}, we shall introduce the so-called expanded Gabidulin codes and propose an efficient algorithm to decode these new codes. Section \ref{section4} introduces two hard problems in coding theory, as well as the best known generic attacks on them. Section \ref{section5} is devoted to a formal description of our two proposals constructed by using expanded Gabidulin codes in the McEliece setting. Section \ref{section6} gives the security analysis of our proposals, including structural attacks and generic attacks. In Section \ref{section7}, we give some suggested parameters for different security levels and make a comparison on public key size with some other code-based cryptosystems. Following this, we make a few concluding remarks in Section \ref{section8}.

%Your text comes here. Separate text sections with
\section{Preliminaries}\label{section2}
In this section we first introduce some notations used throughout this paper, and recall some basic concepts about linear codes and rank metric codes. After that, we will introduce the definition of Gabidulin codes and some related results.

\subsection{Notations and basic concepts}
Let $q$ be a prime power. Denote by $\mathbb{F}_q$ the finite field with $q$ elements, and $\mathbb{F}_{q^m}$ an extension field of $\mathbb{F}_q$ of degree $m$. For two positive integers $k$ and $n$, let $\mathcal{M}_{k,n}(\mathbb{F}_q)$ denote the space of all $k\times n$ matrices over $\mathbb{F}_q$, and $GL_n(\mathbb{F}_q)$ the general linear group of all invertible matrices in $\mathcal{M}_{n,n}(\mathbb{F}_q)$. For a matrix $M\in\mathcal{M}_{k,n}(\mathbb{F}_q)$, let $\langle M\rangle_{\mathbb{F}_q}$ be the vector space spanned by the rows of $M$ over $\mathbb{F}_q$.

An $[n,k]$ linear code $\mathcal{C}$ over $\mathbb{F}_q$ is a $k$-dimensional subspace of $\mathbb{F}_q^n$. An element of $\mathcal{C}$ is called a codeword of $\mathcal{C}$. The dual code of $\mathcal{C}$, denoted by $\mathcal{C}^\perp$, is the orthogonal space of $\mathcal{C}$ under the usual inner product over $\mathbb{F}_q^n$. A matrix $G$ is called a generator matrix of $\mathcal{C}$ if its rows form a basis of $\mathcal{C}$. A generator matrix $H$ of $\mathcal{C}^\perp$ is called a parity-check matrix of $\mathcal{C}$. For a codeword $\bm{c}\in\mathcal{C}$, the Hamming weight of $\bm{c}$, denoted by $w_H(\bm{c})$, is the number of nonzero components of $\bm{c}$. The minimum Hamming distance of $\mathcal{C}$, denoted by $d_H(\mathcal{C})$, is defined as the minimum Hamming weight of nonzero codewords in $\mathcal{C}$. The minimum Hamming distance of $\mathcal{C}$ has $n-k+1$ as an upper bound, and we call $\mathcal{C}$ Maximum Distance Separable (MDS) when $d_H(\mathcal{C})$ reaches to this bound.

\subsection{Rank metric codes}
Now we recall some basic concepts about rank metric codes.
\begin{definition}
For a vector $\bm{x}\in\mathbb{F}_{q^m}^n$, the rank support of $\bm{x}$, denoted by $\supp(\bm{x})$, is defined as the linear space spanned by the components of $\bm{x}$ over $\mathbb{F}_q$. 
\end{definition}
 
\begin{definition}
For a vector $\bm{x}\in\mathbb{F}_{q^m}^n$, the rank weight of $\bm{x}$, denoted by $w_R(\bm{x})$, is defined as the dimension of $\supp(\bm{x})$ over $\mathbb{F}_q$. 
\end{definition}

\begin{definition}
For two vectors $\bm{x},\bm{y}\in\mathbb{F}_{q^m}^n$, the rank distance between $\bm{x}$ and $\bm{y}$, denoted by $d_R(\bm{x},\bm{y})$, is defined to be the rank weight of $\bm{x}-\bm{y}$.
\end{definition}

\begin{definition}
For a linear code $\mathcal{C}\subseteq \mathbb{F}_{q^m}^n$, the minimum rank distance of $\mathcal{C}$, denoted by $d_R(\mathcal{C})$, is defined to be the minimum rank weight of nonzero codewords in $\mathcal{C}$. 
\end{definition}

A linear code endowed with the rank metric is called a rank metric code. Similar to Hamming metric codes, the minimum rank distance of a rank metric code is bounded from above by the Singleton-type bound as described in the following proposition.

\begin{proposition}[Singleton-type bound]\cite{gabidulin2003reducible}
For positive integers $k\leqslant n\leqslant m$, let $\mathcal{C}$ be an $[n,k]$ rank metric code over $\mathbb{F}_{q^m}$, then the minimum rank distance of $\mathcal{C}$ with respect to $\mathbb{F}_q$ satisfies the following inequality
\[d_R(\mathcal{C})\leqslant n-k+1.\]
\end{proposition}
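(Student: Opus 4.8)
The plan is to prove the Singleton-type bound $d_R(\mathcal{C})\leqslant n-k+1$ by the standard shortening argument, adapted to the rank metric via the key observation that the rank weight is always bounded above by the Hamming weight.

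First I would recall that for any $\bm{x}\in\mathbb{F}_{q^m}^n$ one has $w_R(\bm{x})\leqslant w_H(\bm{x})$: the rank support $\supp(\bm{x})$ is spanned by the nonzero components of $\bm{x}$, so its dimension over $\mathbb{F}_q$ is at most the number of those components. Next I would puncture or shorten the code: choose any set $S\subseteq\{1,\dots,n\}$ of $k-1$ coordinates and consider the subcode $\mathcal{C}_S=\{\bm{c}\in\mathcal{C}:c_i=0 \text{ for all } i\in S\}$. Since imposing $k-1$ linear conditions on a $k$-dimensional $\mathbb{F}_{q^m}$-space leaves a subspace of dimension at least $1$, there exists a nonzero codeword $\bm{c}\in\mathcal{C}_S$. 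This codeword is supported on the remaining $n-(k-1)$ coordinates, so $w_H(\bm{c})\leqslant n-k+1$, and hence $d_R(\mathcal{C})\leqslant w_R(\bm{c})\leqslant w_H(\bm{c})\leqslant n-k+1$.

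I expect the argument to be essentially routine; the only point that requires a moment's care is confirming that the subcode $\mathcal{C}_S$ is genuinely nonzero, which follows from a dimension count (the map $\mathcal{C}\to\mathbb{F}_{q^m}^{k-1}$ sending $\bm{c}$ to its restriction on $S$ has a kernel of dimension at least $k-(k-1)=1$). One should also note that the hypothesis $n\leqslant m$ guarantees the rank weight is measured in an ambient $\mathbb{F}_q$-space of dimension $m\geqslant n$, so there is no extra ceiling on $w_R$ beyond $n$; this is not strictly needed for the inequality as stated, but it explains why the bound is the natural analogue of the classical Singleton bound. Thus the proof reduces to the inequality $w_R\leqslant w_H$ together with the classical shortening trick, and there is no real obstacle.
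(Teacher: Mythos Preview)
Your argument is correct: the inequality $w_R(\bm{x})\leqslant w_H(\bm{x})$ combined with the classical shortening trick (finding a nonzero codeword supported on at most $n-k+1$ coordinates) yields the bound immediately, and your dimension count for the shortened subcode is sound.

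Note, however, that the paper does not actually supply a proof of this proposition; it is simply quoted from \cite{gabidulin2003reducible} as a known result. So there is no ``paper's own proof'' to compare against. Your approach is in fact the standard one, and it is essentially the same observation the paper uses implicitly in the remark that follows (where it notes $w_R(\bm{c})\leqslant w_H(\bm{c})$ to conclude that MRD codes are MDS).
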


\begin{remark}
A rank metric code attaining the Singleton-type bound is called a Maximum Rank Distance (MRD) code. Let $\mathcal{C}\subseteq\mathbb{F}_{q^m}^n$ be an $[n,k]$ MRD code and $\bm{c}\in\mathcal{C}$ be a nonzero codeword with $w_H(\bm{c})=d_H(\mathcal{C})$. It is clear that $n-k+1\leqslant w_R(\bm{c})\leqslant w_H(\bm{c})\leqslant n-k+1$. Then $w_H(\bm{c})=n-k+1$, which implies that an MRD code is MDS in the Hamming metric.
\end{remark}

\subsection{Gabidulin codes}
For an integer $i$, we denote by $[i]=q^i$ the $i$-th Frobenius power. Under this notation, $\alpha^{q^i}$ can be simply written as $\alpha^{[i]}$ for any $\alpha\in\mathbb{F}_{q^m}$. For a vector $\bm{v}\in\mathbb{F}_{q^m}^n$, we denote by $\bm{v}^{[i]}$ the $i$-th component-wise Frobenius power of $\bm{v}$. For a linear code $\mathcal{C}\subseteq\mathbb{F}_{q^m}^n$, the $i$-th Frobenius power of $\mathcal{C}$ is defined as $\mathcal{C}^{[i]}=\{\bm{c}^{[i]}:\bm{c}\in\mathcal{C}\}$.

\begin{definition}[Gabidulin codes]\label{definition1}
For positive integers $k\leqslant n\leqslant m$, let $\bm{g}=(g_1,\ldots,g_n)\in\mathbb{F}_{q^m}^n$ with $w_R(\bm{g})=n$. The $[n,k]$ Gabidulin code $\tl{Gab}_{n,k}(\bm{g})$ generated by $\bm{g}$ is defined to be a linear code having a generator matrix of the form
\[G=
\begin{pmatrix}
g_1&g_2&\cdots&g_n\\
g_1^{[1]}&g_2^{[1]}&\cdots&g_n^{[1]}\\
\vdots&\vdots&\ddots&\vdots\\
g_1^{[k-1]}&g_2^{[k-1]}&\cdots&g_n^{[k-1]}
\end{pmatrix}.
\]
\end{definition}

Similar to GRS codes in the Hamming metric, Gabidulin codes also admit an excellent error-correcting capability and simple algebraic structure. The following two theorems describe some properties of Gabidulin codes.
\begin{theorem}\cite{horlemann2015new}\label{mrd}
A Gabidulin code is an MRD code. In other words, the minimum rank weight of $\tl{Gab}_{n,k}(\bm{g})$ attains the Singleton-type bound.
\end{theorem}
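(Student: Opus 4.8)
The plan is to describe the codewords of $\tl{Gab}_{n,k}(\bm{g})$ by linearized polynomials and then bound the rank weight from below. Every codeword is of the form $\bm{c}=(a_0,\ldots,a_{k-1})G=(P(g_1),\ldots,P(g_n))$ for some $(a_0,\ldots,a_{k-1})\in\mathbb{F}_{q^m}^k$, where $P(x)=\sum_{i=0}^{k-1}a_ix^{[i]}$ is a $q$-linearized polynomial over $\mathbb{F}_{q^m}$ of $q$-degree at most $k-1$. The key input is the classical fact that a nonzero such $P$ has at most $q^{k-1}$ roots in any extension of $\mathbb{F}_q$, and since $P$ is additive and satisfies $P(\alpha x)=\alpha P(x)$ for $\alpha\in\mathbb{F}_q$, its set of roots in $\mathbb{F}_{q^m}$ forms an $\mathbb{F}_q$-subspace; hence $\dim_{\mathbb{F}_q}\ker P\leqslant k-1$.

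Next I would turn the rank weight of a codeword into a rank--nullity count. Since $w_R(\bm{g})=n$, the entries $g_1,\ldots,g_n$ are $\mathbb{F}_q$-linearly independent and span an $n$-dimensional subspace $V=\supp(\bm{g})\subseteq\mathbb{F}_{q^m}$. The polynomial $P$ restricts to an $\mathbb{F}_q$-linear map $V\to\mathbb{F}_{q^m}$ whose image is exactly $\langle P(g_1),\ldots,P(g_n)\rangle_{\mathbb{F}_q}=\supp(\bm{c})$. For $\bm{c}\neq\bm{0}$ we have $P\neq 0$, so by rank--nullity $w_R(\bm{c})=\dim_{\mathbb{F}_q}P(V)=n-\dim_{\mathbb{F}_q}(V\cap\ker P)\geqslant n-(k-1)=n-k+1$. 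Applying the same observation with $P\neq 0$ for every $(a_0,\ldots,a_{k-1})\neq\bm{0}$ (note $n\geqslant k>k-1$) also shows $G$ has rank $k$, so $\tl{Gab}_{n,k}(\bm{g})$ is genuinely an $[n,k]$ code. Thus $d_R(\tl{Gab}_{n,k}(\bm{g}))\geqslant n-k+1$.

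Finally I would invoke the Singleton-type bound of the proposition above, which gives $d_R(\tl{Gab}_{n,k}(\bm{g}))\leqslant n-k+1$, and combine the two inequalities to conclude equality; hence the code is MRD. The only non-formal ingredient is the root count for linearized polynomials, so the main point to get right is a clean justification of that step: if $a_j$ is the highest nonzero coefficient of $P$ then $P$ has ordinary degree $q^j\leqslant q^{k-1}$ and therefore at most $q^{k-1}$ roots, while additivity together with $\mathbb{F}_q$-homogeneity makes the root set an $\mathbb{F}_q$-subspace, forcing its dimension to be at most $k-1$.
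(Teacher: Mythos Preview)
Your argument is correct and is in fact the classical proof that Gabidulin codes are MRD, going back to Gabidulin's original paper: represent codewords as evaluations of $q$-linearized polynomials of $q$-degree at most $k-1$, use that such a nonzero polynomial has an $\mathbb{F}_q$-linear kernel of dimension at most $k-1$, and apply rank--nullity on the $n$-dimensional support $V=\langle g_1,\ldots,g_n\rangle_{\mathbb{F}_q}$ to get $w_R(\bm{c})=\dim_{\mathbb{F}_q}P(V)\geqslant n-k+1$. All steps are justified, including the verification that $G$ has full row rank.

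Note, however, that the paper does not supply its own proof of this theorem: it is stated with a citation to \cite{horlemann2015new} and no argument is given in the text. So there is nothing to compare against in the paper itself; your proof simply fills in what the authors chose to quote from the literature, and it does so by the standard route.
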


This implies that the Gabidulin code $\tl{Gab}_{n,k}(\bm{g})$ can theoretically correct up to $\lfloor\frac{n-k}{2}\rfloor$ rank errors, which is an important reason for Gabidulin codes being widely used in the design of cryptosystems.

\begin{theorem}\cite{gaborit2018polynomial}
The dual of a Gabidulin code is also a Gabidulin code. Specifically, we have $\tl{Gab}_{n,k}(\bm{g})^\perp=\tl{Gab}_{n,n-k}(\bm{h}^{q^{-(n-k-1)}})$ for some $\bm{h}\in\tl{Gab}_{n,n-1}(\bm{g})^\perp$ with $w_R(\bm{h})=n$, where $\bm{h}^j$ denotes the $j$-th component-wise power of $\bm{h}$.
\end{theorem}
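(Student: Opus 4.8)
The plan is to pin down the vector $\bm h$ first, and then derive the stated identity from a dimension count plus a single orthogonality computation, whose crux is the way the Frobenius automorphism interacts with the standard inner product.

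First I would produce $\bm h$. Since $w_R(\bm g)=n$, the code $\tl{Gab}_{n,n-1}(\bm g)$ is an $[n,n-1]$ code, and by Theorem \ref{mrd} it is MRD; invoking the classical fact that the dual of an MRD code is again MRD (the rank-metric analogue of ``the dual of an MDS code is MDS''), the code $\tl{Gab}_{n,n-1}(\bm g)^{\perp}$ is an $[n,1]$ MRD code, so its minimum rank distance is $n-1+1=n$, and any nonzero generator $\bm h$ of this one-dimensional code satisfies $w_R(\bm h)=n$. For an integer $j$, the map $x\mapsto x^{[j]}$ is an $\mathbb{F}_q$-algebra automorphism of $\mathbb{F}_{q^m}$, hence preserves rank weight, so $\bm h^{[-(n-k-1)]}$ also has rank weight $n$, and (assuming, as one may, $1\le k\le n-1$) the code $\tl{Gab}_{n,n-k}(\bm h^{[-(n-k-1)]})$ is a genuine Gabidulin code in the sense of Definition \ref{definition1}. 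Writing out that definition, its generator matrix has rows $(\bm h^{[-(n-k-1)]})^{[t]}=\bm h^{[t-(n-k-1)]}$ for $t=0,\dots,n-k-1$, so this code equals $\mathrm{span}_{\mathbb{F}_{q^m}}\{\bm h^{[-b]}:0\le b\le n-k-1\}$ and has dimension $n-k$.

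Next I would reduce the theorem to a single inclusion: since $\tl{Gab}_{n,k}(\bm g)$ has dimension $k$, its dual has dimension $n-k$, matching the dimension of $\tl{Gab}_{n,n-k}(\bm h^{[-(n-k-1)]})$, so it suffices to prove $\tl{Gab}_{n,n-k}(\bm h^{[-(n-k-1)]})\subseteq\tl{Gab}_{n,k}(\bm g)^{\perp}$, i.e. that $\langle\bm g^{[a]},\bm h^{[-b]}\rangle=0$ for all $0\le a\le k-1$ and $0\le b\le n-k-1$, where $\langle\bm x,\bm y\rangle=\sum_{i=1}^{n}x_iy_i$ is the standard inner product on $\mathbb{F}_{q^m}^{n}$. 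By construction $\bm h$ is orthogonal to each row of the generator matrix of $\tl{Gab}_{n,n-1}(\bm g)$, that is $\sum_{i=1}^{n}g_i^{[j]}h_i=0$ for $0\le j\le n-2$. Given $a,b$ as above, $a+b\le(k-1)+(n-k-1)=n-2$, so $\sum_{i=1}^{n}g_i^{[a+b]}h_i=0$; applying the automorphism $x\mapsto x^{[-b]}$ of $\mathbb{F}_{q^m}$ (the inverse of the $b$-fold Frobenius) to this scalar identity, and using that it is additive, multiplicative, and satisfies $(x^{[a+b]})^{[-b]}=x^{[a]}$, gives $\sum_{i=1}^{n}g_i^{[a]}h_i^{[-b]}=0$, i.e. $\langle\bm g^{[a]},\bm h^{[-b]}\rangle=0$. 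Combined with the dimension count, this yields $\tl{Gab}_{n,k}(\bm g)^{\perp}=\tl{Gab}_{n,n-k}(\bm h^{[-(n-k-1)]})=\tl{Gab}_{n,n-k}(\bm h^{q^{-(n-k-1)}})$.

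The computational heart above is short, so I expect the one genuine obstacle to be the very first step: guaranteeing that $\tl{Gab}_{n,n-1}(\bm g)^{\perp}$ is generated by a full-rank vector $\bm h$ (equivalently $w_R(\bm h)=n$). This is exactly what makes the right-hand side a legitimate Gabidulin code and what makes the dimension count go through, and it is the only place where the MRD structure of the dual is really used; it may be taken as a known property of MRD codes or else derived from the rank-metric MacWilliams identities. The degenerate cases ($k=n$, where the statement is vacuous, and very small $n$) only need a brief separate mention.
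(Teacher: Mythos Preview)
The paper does not prove this theorem; it is simply quoted from \cite{gaborit2018polynomial} without argument, so there is no proof in the paper to compare your proposal against. Your argument is correct and is the standard one: the only substantive point is the existence of $\bm h$ with $w_R(\bm h)=n$, which you correctly reduce to the fact that the dual of an MRD code is MRD (a classical result of Delsarte and Gabidulin), and the orthogonality computation $\langle\bm g^{[a]},\bm h^{[-b]}\rangle=0$ via the inverse Frobenius automorphism together with the dimension count then finishes the proof cleanly.
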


\section{Expanded Gabidulin codes}\label{section3}
In this section, we first introduce the definition of expanded Gabidulin codes, then investigate some of their algebraic properties. After that, we propose an efficient algorithm to decode these codes.
\subsection{Expanded Gabidulin codes}\label{sect3.1}
Note that $\mathbb{F}_{q^m}$ can be viewed as an $\mathbb{F}_q$-linear space of dimension $m$. Let $\mathcal{B}=(\alpha_1,\ldots,\alpha_m)$ be a basis of $\mathbb{F}_{q^m}$ over $\mathbb{F}_q$. For any $\alpha\in\mathbb{F}_{q^m}$, there exists $(a_1,\ldots,a_m)\in\mathbb{F}_q^m$ such that $\alpha=\sum_{i=1}^ma_i\alpha_i$. Based on this observation, we define an $\mathbb{F}_q$-linear isomorphism from $\mathbb{F}_{q^m}$ to $\mathbb{F}_q^m$ with respect to $\mathcal{B}$ as follows
\begin{align*}
\phi_\mathcal{B}:\mathbb{F}_{q^m}&\longmapsto \mathbb{F}_q^m,\\
\sum_{i=1}^ma_i\alpha_i&\longmapsto (a_1,\ldots,a_m).
\end{align*}
For a vector $\bm{v}=(v_1,\ldots,v_n)\in\mathbb{F}_{q^m}^n$, we define $\phi_\mathcal{B}(\bm{v})=(\phi_\mathcal{B}(v_1),\ldots,\phi_\mathcal{B}(v_n))\in\mathbb{F}_q^{mn}$. For a linear code $\mathcal{C}\subseteq\mathbb{F}_{q^m}^n$, we define $\phi_\mathcal{B}(\mathcal{C})=\{\phi_\mathcal{B}(\bm{c}):\bm{c}\in\mathcal{C}\}$. 

For convenience, we need to introduce a matrix representation of this transformation in \cite{berger2017gabidulin,couvreur2022onthe}. For any $\alpha\in\mathbb{F}_{q^m}^*$, there exists a unique $M_\alpha\in GL_m(\mathbb{F}_q)$ such that $\alpha\mathcal{B}^T=M_\alpha\mathcal{B}^T$. Apparently $M_\alpha$ depends on the choice of $\mathcal{B}$. For a fixed $\mathcal{B}$, we define an $\mathbb{F}_q$-linear homomorphism from $\mathbb{F}_{q^m}^*$ to $GL_m(\mathbb{F}_q)$ as $\Phi_\mathcal{B}(\alpha)=M_\alpha$. And specially, we define $\Phi_\mathcal{B}(0)=0_{m\times m}$. It is easy to verify that
\[
\Phi_\mathcal{B}(\alpha)=
\begin{pmatrix}
\phi_\mathcal{B}(\alpha\alpha_1)\\
\vdots\\
\phi_\mathcal{B}(\alpha\alpha_m)
\end{pmatrix}.
\]
For a vector $\bm{v}\in\mathbb{F}_{q^m}^n$, we define $\Phi_\mathcal{B}(\bm{v})=\begin{pmatrix}\Phi_\mathcal{B}(v_1),\ldots,\Phi_\mathcal{B}(v_n)\end{pmatrix}\in\mathcal{M}_{m,nm}(\mathbb{F}_q)$. For a matrix $M=(M_{ij})\in\mathcal{M}_{k,n}(\mathbb{F}_q)$, we define $\Phi_\mathcal{B}(M)=(\Phi_\mathcal{B}(M_{ij}))\in\mathcal{M}_{km,nm}(\mathbb{F}_q)$.

Now we give an effective method to perform this operation, which is based on the following theorem.

\begin{theorem}\cite{mullen2013handbook}\label{theorem3}
Let $\mathcal{B}=(\alpha_1,\ldots,\alpha_m)$ be a basis of $\mathbb{F}_{q^m}$ over $\mathbb{F}_q$, then there exists another basis $\mathcal{B}^*=(\alpha_1^*,\ldots,\alpha_m^*)$ such that for $1\leqslant i,j\leqslant m$ we have
\begin{equation}\label{dual}
\tl{Tr}(\alpha_i\alpha_j^*)=
\begin{cases}
1\quad \tl{for}\ i=j,\\
0\quad \tl{for}\ i\neq j,
\end{cases}
\end{equation}
where $\tl{Tr}(\cdot)$ denotes the trace function $\tl{Tr}(x)=\sum_{i=0}^{m-1}x^{q^i}$.
\end{theorem}

\begin{remark}\label{remark2}
A basis $\mathcal{B}^*$ satisfying the condition (\ref{dual}) is called a dual basis of $\mathcal{B}$. Furthermore, $\mathcal{B}^*$ is uniquely determined by $\mathcal{B}$. For any $\alpha\in\mathbb{F}_{q^m}$, let $(a_1,\ldots,a_m)\in\mathbb{F}_q^m$ such that $\alpha=\sum_{i=1}^ma_i\alpha_i$. Then we can obtain $a_j$ for $1\leqslant j\leqslant m$ by computing
\begin{align*}
\tl{Tr}(\alpha\alpha_j^*)&=\tl{Tr}(\sum_{i=1}^ma_i\alpha_i\alpha_j^*)=\sum_{i=1}^ma_i\tl{Tr}(\alpha_i\alpha_j^*)=a_j.
\end{align*}
Finally we have $\phi_\mathcal{B}(\alpha)=(\tl{Tr}(\alpha\alpha_1^*),\ldots,\tl{Tr}(\alpha\alpha_m^*))$.
\end{remark}

Now we formally introduce the concept of expanded Gabidulin codes.
\begin{definition}[Expanded Gabidulin codes]
Let $\mathcal{G}$ be an $[n,k]$ Gabidulin code over $\mathbb{F}_{q^m}$. For a basis $\mathcal{B}$ of $\mathbb{F}_{q^m}$ over $\mathbb{F}_q$, let $\phi_\mathcal{B}$ be the associated $\mathbb{F}_q$-linear isomorphism from $\mathbb{F}_{q^m}$ to $\mathbb{F}_q^m$. The expanded code of $\mathcal{G}$ induced by $\phi_\mathcal{B}$ is defined as $\widehat{\mathcal{G}}=\{\phi_\mathcal{B}(\bm{c}):\bm{c}\in\mathcal{G}\}$. Meanwhile, we call $\mathcal{G}$ the parent Gabidulin code of $\widehat{\mathcal{G}}$.
\end{definition}

It is easy to verify that $\widehat{\mathcal{G}}$ forms an $[nm,km]$ linear code over $\mathbb{F}_q$. The following proposition gives a method of constructing a generator (parity-check) matrix of an expanded Gabidulin code when a generator (parity-check) matrix of its parent Gabidulin code is known.
\begin{proposition} \cite{wu2011on}\label{theorem}
Let $\mathcal{G}\subseteq\mathbb{F}_{q^m}^n$ be an $[n,k]$ Gabidulin code. For a basis $\mathcal{B}=(\alpha_1,\ldots,\alpha_m)$ of $\mathbb{F}_{q^m}$ over $\mathbb{F}_q$, let $\widehat{\mathcal{G}}$ be the expanded code of $\mathcal{G}$ induced by $\phi_\mathcal{B}$. Then we have the following conclusions.
\begin{itemize}
\item[(1)]Let $G=
\begin{bmatrix}
\bm{g}_1^T,\ldots,\bm{g}_k^T
\end{bmatrix}^T
$
be a generator matrix of $\mathcal{G}$, then $\widehat{\mathcal{G}}$ has an $mk\times mn$ generator matrix of the form
\[\widehat{G}=\Phi_\mathcal{B}(G)=
\begin{bmatrix}\label{egabidulin1}
\phi_\mathcal{B}(\alpha_1\bm{g}_1)^T,\ldots,\phi_\mathcal{B}(\alpha_m\bm{g}_1)^T,\ldots,\phi_\mathcal{B}(\alpha_1\bm{g}_k)^T,\ldots,
\phi_\mathcal{B}(\alpha_m\bm{g}_k)^T
\end{bmatrix}^T.
\]
We call such $\Phi_\mathcal{B}(G)$ a normal generator matrix of $\widehat{\mathcal{G}}$.
\item[(2)]Let $H=\begin{bmatrix}\bm{h}_1^T,\ldots,\bm{h}_n^T\end{bmatrix}$ be a parity-check matrix of $\mathcal{G}$, then $\widehat{\mathcal{G}}$ has an $m(n-k)\times nm$ parity-check matrix of the form
\begin{align}\label{paritycheck}
\widehat{H}=\Phi_\mathcal{B}(H^T)^T=\begin{bmatrix}\phi_\mathcal{B}(\alpha_1\bm{h}_1)^T,\ldots,\phi_\mathcal{B}(\alpha_m\bm{h}_1)^T,\ldots,\phi_\mathcal{B}(\alpha_1\bm{h}_n)^T,\ldots,\phi_\mathcal{B}(\alpha_m\bm{h}_n)^T\end{bmatrix}.
\end{align}
\end{itemize}
\end{proposition}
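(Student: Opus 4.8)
The plan is to prove both parts using one common tool: the entrywise extension of $\Phi_\mathcal{B}$ to matrices behaves like a ring homomorphism. Recall from the discussion preceding the statement that $\Phi_\mathcal{B}\colon\mathbb{F}_{q^m}\to\mathcal{M}_{m,m}(\mathbb{F}_q)$ is additive and multiplicative with $\Phi_\mathcal{B}(1)=I_m$; computing blockwise this gives $\Phi_\mathcal{B}(AB)=\Phi_\mathcal{B}(A)\Phi_\mathcal{B}(B)$ for any conformable matrices $A,B$ over $\mathbb{F}_{q^m}$. Together with the description of $\Phi_\mathcal{B}(\alpha)$ by its rows $\phi_\mathcal{B}(\alpha\alpha_1),\dots,\phi_\mathcal{B}(\alpha\alpha_m)$, this shows that the $i$-th row inside the $s$-th block-row of $\Phi_\mathcal{B}(G)$ is exactly $\phi_\mathcal{B}(\alpha_i\bm{g}_s)$, and dually that the columns of $\Phi_\mathcal{B}(H^T)^T$ are the vectors $\phi_\mathcal{B}(\alpha_i\bm{h}_j)^T$; hence the explicit matrices $\widehat{G}$ and $\widehat{H}$ in the statement are merely relabelings of $\Phi_\mathcal{B}(G)$ and $\Phi_\mathcal{B}(H^T)^T$, and it remains only to show that they generate, respectively, $\widehat{\mathcal{G}}$ and $\widehat{\mathcal{G}}^\perp$.

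For part~(1), I would first show $\widehat{\mathcal{G}}=\langle\Phi_\mathcal{B}(G)\rangle_{\mathbb{F}_q}$. Any element of $\widehat{\mathcal{G}}$ has the form $\phi_\mathcal{B}(\bm{c})$ with $\bm{c}=\sum_{s=1}^{k}\lambda_s\bm{g}_s$ and $\lambda_s\in\mathbb{F}_{q^m}$; writing $\lambda_s=\sum_{i=1}^{m}a_{is}\alpha_i$ with $a_{is}\in\mathbb{F}_q$ and using that $\phi_\mathcal{B}$ is $\mathbb{F}_q$-linear gives $\phi_\mathcal{B}(\bm{c})=\sum_{s,i}a_{is}\phi_\mathcal{B}(\alpha_i\bm{g}_s)$, an $\mathbb{F}_q$-linear combination of the rows of $\Phi_\mathcal{B}(G)$. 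Conversely, each such row $\phi_\mathcal{B}(\alpha_i\bm{g}_s)$ lies in $\widehat{\mathcal{G}}$ because $\alpha_i\bm{g}_s\in\mathcal{G}$ by the $\mathbb{F}_{q^m}$-linearity of $\mathcal{G}$. Finally, since $\phi_\mathcal{B}$ is an $\mathbb{F}_q$-linear isomorphism, $\dim_{\mathbb{F}_q}\widehat{\mathcal{G}}=\dim_{\mathbb{F}_q}\mathcal{G}=m\dim_{\mathbb{F}_{q^m}}\mathcal{G}=mk$; as $\Phi_\mathcal{B}(G)$ has exactly $mk$ rows and they span $\widehat{\mathcal{G}}$, they form a basis, so $\Phi_\mathcal{B}(G)$ is a generator matrix of $\widehat{\mathcal{G}}$.

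For part~(2), orthogonality is immediate: noting $\widehat{H}^T=\Phi_\mathcal{B}(H^T)$, we get $\widehat{G}\widehat{H}^T=\Phi_\mathcal{B}(G)\Phi_\mathcal{B}(H^T)=\Phi_\mathcal{B}(GH^T)=0$, because $GH^T=0$ since $H$ is a parity-check matrix of $\mathcal{G}$; hence $\langle\widehat{H}\rangle_{\mathbb{F}_q}\subseteq\widehat{\mathcal{G}}^\perp$. Since $\dim_{\mathbb{F}_q}\widehat{\mathcal{G}}^\perp=nm-km=m(n-k)$, it then suffices to show that $\widehat{H}$ has full row rank $m(n-k)$: then $\langle\widehat{H}\rangle_{\mathbb{F}_q}$ is an $m(n-k)$-dimensional subspace of $\widehat{\mathcal{G}}^\perp$, so it equals $\widehat{\mathcal{G}}^\perp$ and $\widehat{H}$ is a parity-check matrix of $\widehat{\mathcal{G}}$. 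For the rank I would establish the auxiliary fact $\rank_{\mathbb{F}_q}\Phi_\mathcal{B}(M)=m\cdot\rank_{\mathbb{F}_{q^m}}M$ for every matrix $M$ over $\mathbb{F}_{q^m}$: take a rank factorization $M=XY$ with $X$ of size $a\times r$ (formed from $r$ independent columns of $M$) and $Y$ of size $r\times b$, both of full rank $r=\rank_{\mathbb{F}_{q^m}}M$; a left inverse of $X$ and a right inverse of $Y$ are sent by $\Phi_\mathcal{B}$ to a left inverse of $\Phi_\mathcal{B}(X)$ and a right inverse of $\Phi_\mathcal{B}(Y)$ respectively, so over $\mathbb{F}_q$ the map $\Phi_\mathcal{B}(X)$ is injective and $\Phi_\mathcal{B}(Y)$ surjective, whence $\rank_{\mathbb{F}_q}\Phi_\mathcal{B}(M)=\rank_{\mathbb{F}_q}\bigl(\Phi_\mathcal{B}(X)\Phi_\mathcal{B}(Y)\bigr)=rm$. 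Applying this with $M=H^T$ (of rank $n-k$) and using that transposition preserves rank gives $\rank_{\mathbb{F}_q}\widehat{H}=m(n-k)$, completing the argument.

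The only genuinely substantive step is this last rank identity for $\Phi_\mathcal{B}$; once multiplicativity of $\Phi_\mathcal{B}$ is available, the orthogonality $\widehat{G}\widehat{H}^T=0$ is cosmetic and all of part~(1) is dimension bookkeeping. A cleaner alternative to the rank factorization, should the surrounding notation favor it, is to observe that $\Phi_\mathcal{B}(M)$ is precisely the matrix, in $\mathcal{B}$-coordinates, of the $\mathbb{F}_q$-linear map $\bm{x}\mapsto\bm{x}M$ (with $\bm{x}$ an $\mathbb{F}_{q^m}$-row vector of appropriate length), so that its $\mathbb{F}_q$-rank equals $m$ times the $\mathbb{F}_{q^m}$-rank of $M$ by the rank-nullity theorem; I would use whichever formulation reads most smoothly in context.
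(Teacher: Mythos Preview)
Your argument is correct. The multiplicativity of $\Phi_\mathcal{B}$ on matrices, the span argument for part~(1), the orthogonality $\widehat{G}\widehat{H}^T=\Phi_\mathcal{B}(GH^T)=0$, and the rank identity $\rank_{\mathbb{F}_q}\Phi_\mathcal{B}(M)=m\cdot\rank_{\mathbb{F}_{q^m}}M$ via a rank factorization are all sound and suffice.

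Note, however, that the paper does not supply its own proof of this proposition: it is stated with a citation to \cite{wu2011on} and left unproved in the text. So there is no in-paper argument to compare your approach against. Your write-up is more than adequate as a self-contained justification; if anything, for the surrounding paper's level of detail you could trim it substantially, since the only nontrivial ingredient is the rank identity (or, equivalently, the observation that $\Phi_\mathcal{B}(M)$ represents the $\mathbb{F}_q$-linear map $\bm{x}\mapsto\bm{x}M$ in $\mathcal{B}$-coordinates).
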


Note that Gabidulin codes are optimal in both the Hamming metric and rank metric. However, expanded Gabidulin codes are far from optimal in the Hamming metric. Specifically, we have the following proposition.

\begin{proposition}\label{theorem5}
Let $\mathcal{G}$ be an $[n,k]$ Gabidulin code over $\mathbb{F}_{q^m}$. For a basis $\mathcal{B}$ of $\mathbb{F}_{q^m}$ over $\mathbb{F}_q$, denote by $\widehat{\mathcal{G}}$ the expanded code of $\mathcal{G}$ induced by $\phi_\mathcal{B}$. Then the minimum Hamming distance of $\widehat{\mathcal{G}}$ satisfies the following inequality
\[n-k+1\leqslant d_H(\widehat{\mathcal{G}})\leqslant m(n-k)+1.\]
In particular, with a proper choice of $\mathcal{B}$, the minimum Hamming distance of $\widehat{\mathcal{G}}$ can reach to $n-k+1$.
\end{proposition}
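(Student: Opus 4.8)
The plan is to handle the two inequalities separately and then to exhibit one explicit basis that meets the lower value. For the lower bound, fix an arbitrary basis $\mathcal{B}$ and a nonzero codeword $\bm{c}=(c_1,\dots,c_n)\in\mathcal{G}$. The vector $\phi_\mathcal{B}(\bm{c})$ is the concatenation of the $n$ length-$m$ blocks $\phi_\mathcal{B}(c_1),\dots,\phi_\mathcal{B}(c_n)$; since $\phi_\mathcal{B}$ is an $\mathbb{F}_q$-linear isomorphism, a block vanishes exactly when the corresponding $c_i$ does, and every nonzero block contributes at least $1$ to the Hamming weight over its own coordinates. Hence $w_H(\phi_\mathcal{B}(\bm{c}))\geqslant w_H(\bm{c})$. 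By Theorem~\ref{mrd} the code $\mathcal{G}$ is MRD, so by the remark following the Singleton-type bound it is MDS, i.e.\ $d_H(\mathcal{G})=n-k+1$; therefore $w_H(\phi_\mathcal{B}(\bm{c}))\geqslant n-k+1$ for every nonzero $\bm{c}$, giving $d_H(\widehat{\mathcal{G}})\geqslant n-k+1$ regardless of $\mathcal{B}$.

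For the upper bound I would simply use that $\widehat{\mathcal{G}}$ is an $[nm,km]$ linear code over $\mathbb{F}_q$, so the Singleton bound in the Hamming metric yields $d_H(\widehat{\mathcal{G}})\leqslant nm-km+1=m(n-k)+1$.

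It remains to find a basis for which equality $d_H(\widehat{\mathcal{G}})=n-k+1$ holds. Since $\mathcal{G}$ is MDS, choose $\bm{c}\in\mathcal{G}$ with $w_H(\bm{c})=n-k+1$. For such a $\bm{c}$ the chain $n-k+1\leqslant w_R(\bm{c})\leqslant w_H(\bm{c})=n-k+1$ (valid because $\mathcal{G}$ is MRD) forces $w_R(\bm{c})=n-k+1$, so the $n-k+1$ nonzero entries of $\bm{c}$ span an $\mathbb{F}_q$-subspace of $\mathbb{F}_{q^m}$ of dimension $n-k+1$; being $n-k+1$ spanning vectors of a space of that very dimension, they are $\mathbb{F}_q$-linearly independent. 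Because $n-k+1\leqslant n\leqslant m$, this set extends to a basis $\mathcal{B}$ of $\mathbb{F}_{q^m}$ over $\mathbb{F}_q$. With respect to this $\mathcal{B}$ every nonzero entry of $\bm{c}$ is a basis element, hence its image under $\phi_\mathcal{B}$ is a standard unit vector of Hamming weight $1$, while every zero entry maps to $\bm{0}$; thus $w_H(\phi_\mathcal{B}(\bm{c}))=n-k+1$, so $d_H(\widehat{\mathcal{G}})\leqslant n-k+1$ for this $\mathcal{B}$, and combined with the lower bound the equality follows.

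I expect no step to be computationally heavy; the only point requiring a genuine idea is the last one, and there the key observation is that the nonzero coordinates of a \emph{minimum}-Hamming-weight codeword of an MRD code are automatically $\mathbb{F}_q$-linearly independent. That is exactly what lets those coordinates be folded into the expansion basis, and it is the one place where the rank-metric optimality of Gabidulin codes — not merely their being MDS — is what makes the argument work.
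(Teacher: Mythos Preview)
Your proof is correct and follows essentially the same route as the paper: both argue the lower bound via the block structure and the MDS property of $\mathcal{G}$, invoke the Singleton bound for the upper bound, and attain the lower value by choosing $\mathcal{B}$ to contain the nonzero entries of a minimum-Hamming-weight codeword. The one difference is that the paper simply asserts ``if $S\subseteq\mathcal{B}$'' without checking that such a basis exists, whereas you supply the missing justification---using the MRD property to force $w_R(\bm{c})=n-k+1$ and hence the $\mathbb{F}_q$-linear independence of the nonzero coordinates---so your version is in fact slightly more complete.
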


\begin{proof}
For any $\widehat{\bm{u}}\in\widehat{\mathcal{G}}$, there exists $\bm{u}=(u_1,\ldots,u_n)\in\mathcal{G}$ such that $\widehat{\bm{u}}=\phi_\mathcal{B}(\bm{u})$. Since $\mathcal{G}$ is MDS in the Hamming metric, then $w_H(\bm{u})\geqslant n-k+1$ for a nonzero $\bm{u}$. Let $I=\{1\leqslant i\leqslant n:u_i\neq 0\}$, then $|I|\geqslant n-k+1$. Apparently $w_H(\widehat{\bm{u}})=\sum_{i\in I}w_H(\phi_\mathcal{B}(u_i))\geqslant n-k+1$ because of $w_H(\phi_\mathcal{B}(\alpha))\geqslant 1$ for any $\alpha\in\mathbb{F}_{q^m}^*$, which implies that $d_H(\widehat{\mathcal{G}})\geqslant n-k+1$. On the other hand, by the Singleton bound for Hamming metric codes, it is clear that $d_H(\widehat{\mathcal{G}})\leqslant m(n-k)+1$.

Let $\bm{v}=(v_1,\ldots,v_n)\in\mathcal{G}$ with $w_H(\bm{v})=n-k+1$, and let $S=\{v_i\neq 0:1\leqslant i\leqslant n\}$. If $S\subseteq \mathcal{B}$, then $w_H(\phi_\mathcal{B}({\bm{v}}))=\sum_{v_i\neq 0}w_H(\phi_\mathcal{B}(v_i))=n-k+1$ because of $w_H(\phi_\mathcal{B}(\alpha))=1$ for any $\alpha\in\mathcal{B}$. This implies that $d_H(\widehat{\mathcal{G}})=n-k+1$.
\end{proof}

\subsection{Decoding expanded Gabidulin codes}\label{condition}
As for Gabidulin codes, several efficient decoding algorithms \cite{gabidulin1985theory,loidreau2005welch,richter2004error} already exist. Now we investigate the decoding problem of expanded Gabidulin codes. Our analysis shows that when the noise vector satisfies a certain condition, decoding an expanded Gabidulin code can be converted into decoding the parent Gabidulin code.

Let $\mathcal{G}\subseteq\mathbb{F}_{q^m}^n$ be an $[n,k]$ Gabidulin code having $H$ as a parity-check matrix. Let $\mathcal{B}$ be a basis of $\mathbb{F}_{q^m}$ over $\mathbb{F}_q$, we denote by $\widehat{\mathcal{G}}$ an expanded code of $\mathcal{G}$ induced by $\phi_\mathcal{B}$. Let $\bm{y}=\bm{c}+\bm{e}$ be the received word, where $\bm{c}\in\widehat{\mathcal{G}}$ and $\bm{e}=(\bm{e}_1,\ldots,\bm{e}_n)\in\mathbb{F}_q^{mn}$ is the noise vector with $\bm{e}_j=(e_{1j},\ldots,e_{mj})\in\mathbb{F}_q^m$. Let $E\in\mathcal{M}_{n,m}(\mathbb{F}_q)$ be a matrix whose $j$-th row is $\bm{e}_j$, called the error matrix corresponding to $\bm{e}$. If the following inequality holds 
\[\rank(E)\leqslant \lfloor\frac{n-k}{2}\rfloor,\] 
then we say $\bm{e}$ satisfies the decodable condition. In this situation, we can obtain a fast decoder $\mathscr{D}_{\widehat{\mathcal{G}}}$ for $\widehat{\mathcal{G}}$ to decode $\bm{y}$ by exploiting the syndrome decoder of $\mathcal{G}$.

Denote by $\widehat{H}$ a parity-check matrix of $\widehat{\mathcal{G}}$. It is easy to see that
\begin{align*}
\bm{y}\widehat{H}^T&=\bm{e}\widehat{H}^T=\sum_{j=1}^n\sum_{i=1}^me_{ij}\phi_\mathcal{B}(\alpha_i\bm{h}_j)=\phi_\mathcal{B}(\sum_{j=1}^n\sum_{i=1}^me_{ij}\alpha_i\bm{h}_j)=\phi_\mathcal{B}(\sum_{j=1}^ne_j^*\bm{h}_j)=\phi_\mathcal{B}(\bm{e}^*H^T),
\end{align*}
where $\bm{e}^*=(e_1^*,\ldots,e_n^*)\in\mathbb{F}_{q^m}^n$ with $e_j^*=\sum_{i=1}^me_{ij}\alpha_i$. It is clear that $\bm{e}^*=\mathcal{B}E^T$, then $w_R(\bm{e}^*)=\rank(E)\leqslant \lfloor\frac{n-k}{2}\rfloor$. Applying the decoder of $\mathcal{G}$ to $\phi_\mathcal{B}^{-1}(\bm{y}\widehat{H}^T)=\bm{e}^*H^T$ will lead to $\bm{e}^*$, then we can recover $\bm{e}$ by computing $\phi_\mathcal{B}(\bm{e}^*)$.

Apparently four steps are needed to decode expanded Gabidulin codes. Firstly, we shall compute the syndrome of the received word $\bm{y}$, which requires an operation of multiplying $\bm{y}$ and $\widehat{H}^T$ together with a complexity of $\mathcal{O}(m^2n(n-k))$ in $\mathbb{F}_q$. Secondly, we shall perform the inverse transformation of $\phi_{\mathcal{B}}$ to the syndrome obtained in the first step, requiring a complexity of $\mathcal{O}(mn)$ in $\mathbb{F}_{q^m}$. The third step shall call the fast decoder of the parent Gabidulin code to obtain an error vector $\bm{e}^*$ with $w_R(\bm{e}^*)\leqslant \lfloor \frac{n-k}{2}\rfloor$, which requires a complexity of $\mathcal{O}(\frac{5}{2}n^2-\frac{3}{2}k^2)$ in $\mathbb{F}_{q^m}$ \cite{loidreau2005welch}. In the last step, we shall compute $\phi_{\mathcal{B}}(\bm{e}^*)$ through the method described in Remark \ref{remark2} with a complexity of $\mathcal{O}(((m-1)(q-1)+1)mn)$ in $\mathbb{F}_{q^m}$. Finally the total complexity of decoding expanded Gabidulin codes is $\mathcal{O}(m^2n(q-1)+mn(3-q)+\frac{5}{2}n^2-\frac{3}{2}k^2)$ in $\mathbb{F}_{q^m}$ plus $\mathcal{O}(m^2n(n-k))$ in $\mathbb{F}_q$.

\section{Two hard problems}\label{section4}
The security of our two proposals in this paper mainly involves two hard problems in coding theory, namely the rank syndrom decoding (RSD) problem and MinRank problem. In this section, we will give a description of these two problems and some well known attacks on them.
\subsection{RSD problem}\label{section4.1}
\begin{definition}[RSD problem]
Let $H\in\mathcal{M}_{n-k,n}(\mathbb{F}_{q^m})$ be a matrix of full rank, $\bm{s}\in\mathbb{F}_{q^m}^{n-k}$ and $t$ be a positive integer. An RSD instance $\mathcal{R}(q,m,n,k,t)$ is to solve $\bm{s}=\bm{e}H^T$ for $\bm{e}\in\mathbb{F}_{q^m}^n$ such that $w_R(\bm{e})\leqslant t$.
\end{definition}

The RSD problem plays a crucial role in rank metric based cryptography. Although this problem is not known to be NP-complete, it is believed to be hard by the community. Up to now, the best known combinatorial attacks on this problem can be found in \cite{chabaud1996crypt,technique2002,complexity2016,algorithm2018}. In what follows, we will recall the principle of the combinatorial attack proposed in \cite{complexity2016}. Although there are some improvements \cite{algorithm2018} for this attack, they are not applicable to our proposals.

To make the description concise, here we introduce a notation used in the sequel. For positive integers $w<v<u$, by $\tl{P}_{\mathbb{F}_q}(u,v,w)$ we denote the probability that a random space of dimension $v$ in a space of dimension $u$ contains a given space of dimension $w$. Using the Gaussian binomial, we have
\[\tl{P}_{\mathbb{F}_q}(u,v,w)=\gaubinom{u-w}{v-w}\bigg/\gaubinom{u}{v}=\prod_{i=0}^{v-w-1}\frac{q^{u-w}-q^i}{q^{v-w}-q^i}\bigg/\prod_{i=0}^{v-1}\frac{q^u-q^i}{q^{v}-q^i}\sim \frac{1}{q^{w(u-v)}}.\]
For an RSD instance $\mathcal{R}(q,m,n,k,t)$, we consider the following two cases to solve the problem. 
\begin{itemize}
\item[]\textbf{Case $\bm{1}$:} $n> m$. Let $\mathcal{B}$ be a basis of $\mathbb{F}_{q^m}$ over $\mathbb{F}_q$, then there exists $E\in\mathcal{M}_{m,n}(\mathbb{F}_q)$ with $\rank(E)\leqslant t$ such that $\bm{e}=\mathcal{B}E$. Let $\mathcal{E}=\langle E^T\rangle_{\mathbb{F}_q}$ and $\mathcal{V}\subseteq\mathbb{F}_q^m$ be an $\mathbb{F}_q$-linear space of dimension $t'\geqslant t$. If $\mathcal{E}\subseteq\mathcal{V}$, then one can express $\bm{e}$ in a basis of $\mathcal{V}$ over $\mathbb{F}_q$. By computing $\bm{s}=\bm{e}H^T$ and expanding this system over the base field, one obtains a linear system of $m(n-k)$ equations and $nt'$ variables over $\mathbb{F}_q$. To have only one solution with overwhelming probability, one needs $nt'\leqslant m(n-k)$, then $t'\leqslant m-\left\lceil \frac{km}{n}\right\rceil$. By taking $t'=m-\left\lceil \frac{km}{n}\right\rceil$, one gets a complexity of $\mathcal{O}(m^3(n-k)^3/p)$ in $\mathbb{F}_q$, where $p=\tl{P}_{\mathbb{F}_q}(m,t',t)$.

\item[]\textbf{Case $\bm{2}$:} $n\leqslant m$. Let $\mathcal{B}$ be a basis of $\mathbb{F}_{q^m}$ over $\mathbb{F}_q$, then there exists $E\in\mathcal{M}_{m,n}(\mathbb{F}_q)$ with $\rank(E)\leqslant t$ such that $\bm{e}=\mathcal{B}E$. Let $\mathcal{E}=\langle E\rangle_{\mathbb{F}_q}$ and $\mathcal{V}\subseteq\mathbb{F}_q^n$ be an $\mathbb{F}_q$-linear space of dimension $t'\geqslant t$. If $\mathcal{E}\subseteq\mathcal{V}$, then one can express $E$ in a basis of $\mathcal{V}$ over $\mathbb{F}_q$. By computing $\bm{s}=\bm{e}H^T$ and expanding this system over the base field, one obtains a linear system of $m(n-k)$ equations and $mt'$ variables over $\mathbb{F}_q$. To have only one solution with overwhelming probability, one needs $mt'\leqslant m(n-k)$, then $t'\leqslant n-k$. By taking $t'=n-k$, one gets a complexity of $\mathcal{O}\left(m^3(n-k)^3/p\right)$ in $\mathbb{F}_q$, where $p=\tl{P}_{\mathbb{F}_q}(n,t',t)$.
\end{itemize}

\subsection{MinRank problem}
\begin{definition}[MinRank problem]
For a finite field $\mathbb{F}_q$ and positive integers $m,n,k,t$, let $M_1,\ldots,$\\$M_k\in\mathcal{M}_{m,n}(\mathbb{F}_q)$. A MinRank instance of parameters $(q,m,n,k,t)$ is to search for $x_1,\ldots,x_k\in\mathbb{F}_q$ such that $\tl{Rank}(\sum_{i=1}^kx_iM_i)\leqslant t$.
\end{definition}

The MinRank problem was first introduced and proven NP-complete by Buss et al. in \cite{computational1999}. This problem is of great importance in both multivariate cryptography \cite{cabarcas2017key} and rank metric based cryptography \cite{complexity2016}. In Table \ref{table1}, we give the best algebraic attacks in \cite{improvements2020} on the MinRank problem, where $\omega=2.8$ is the linear algebra constant.

\begin{table}[!h]\label{table1}
\setlength{\abovecaptionskip}{-0.1cm}
\setlength{\belowcaptionskip}{-0.2cm}
\begin{center}
\begin{tabular}{|c|l|}
\hline
 Condition & \makecell*[c]{Complexity}\\
\hline
\makecell*[c]{$A=k\binom{n}{t},B=m\binom{n}{t+1}$,\\ $A-1\leqslant B$}&$\mathcal{O}(BA^{\omega-1})$\\
\hline
\makecell*[c]{$A_b=\binom{n}{t}\binom{k+b-1}{b},B_b=\sum_{i=1}^b(-1)^{i+1}\binom{n}{t+i}\binom{m+i-1}{i}\binom{k+b-i-1}{b-i} $,\\$b=\min\{b:A_b-1\leqslant B_b\}$ and $q>b,b<t+2$}  &$\mathcal{O}\left(k(t+1)A_b^2\right)$\\
\hline
\makecell*[c]{$A_b=\sum_{j=1}^b\binom{n}{t}\binom{k}{j},B_b=\sum_{j=1}^b\sum_{i=1}^j(-1)^{i+1}\binom{n}{t+i}\binom{m+i-1}{i}\binom{k}{j-i}$,\\ $b=\min\{b:A_b-1\leqslant B_b\}$ and $q=2,b<t+2$} & $\mathcal{O}\left(k(t+1)A_b^2\right)$\\
\hline

\end{tabular}
\end{center}
\caption{Best algebraic attacks on the MinRank problem.}\label{table1}
\end{table}

\section{Description of our proposals}\label{section5}
Now we give a formal description of our two proposals.

\subsection{Proposal \Rmnum{1}}
For a given security level, choose a finite field $\mathbb{F}_q$ and positive integers $k<n\leqslant m$ and $\lambda$ such that $\frac{m(n-k)}{n}<\lambda<m$. Let $K=\lambda n-m(n-k)$ and $N=\lambda n$. For $0\leqslant j\leqslant n-1$, we define $I_j=\{mj+1,\ldots,mj+\lambda\}$ and let $S=\cup_{j=0}^{n-1}I_j$. Now we give a formal description of our first proposal through the following three procedures.

\begin{itemize}
\item Key generation\\
Let $\mathcal{G}\subseteq\mathbb{F}_{q^m}^n$ be an $[n,k]$ Gabidulin code. Randomly choose a basis $\mathcal{B}=(\alpha_1,\ldots,\alpha_m)$ of $\mathbb{F}_{q^m}$ over $\mathbb{F}_q$ and let $\widehat{\mathcal{G}}=\phi_\mathcal{B}(\mathcal{G})$. Denote by $\widehat{H}$ a parity-check matrix of $\widehat{\mathcal{G}}$ of the form $(\ref{paritycheck})$, and $\widehat{H}_S$ a submatrix of $\widehat{H}$ from the columns indexed by $S$. Let $\widehat{\mathcal{G}}_S=\langle\widehat{H}_S\rangle_{\mathbb{F}_q}^\perp$ and $\widehat{G}_S$ be a generator matrix of $\widehat{\mathcal{G}}_S$. Randomly choose $A\in GL_\lambda(\mathbb{F}_q)$ and set $T=I_n\otimes A$, where $I_n$ is the identity matrix of order $n$. Randomly choose $M\in GL_K(\mathbb{F}_q)$ such that $G_{pub}=M\widehat{G}_ST^{-1}$ is of systematic form. If such an $M$ does not exist, then repeat the process above. The public key is $(G_{pub},t)$ where $t=\lfloor\frac{n-k}{2}\rfloor$, and the secret key is $(\widehat{H}_S,A,\mathscr{D}_{\widehat{\mathcal{G}}})$ where $\mathscr{D}_{\widehat{\mathcal{G}}}$ is the fast decoder of $\widehat{\mathcal{G}}$.
\item Encryption\\
For a plaintext $\bm{x}\in\mathbb{F}_{q}^K$, randomly choose $E\in\mathcal{M}_{n,\lambda}(\mathbb{F}_q)$ with $\rank(E)=t$. Let $\bm{e}=(\bm{e}_1,\ldots,\bm{e}_n)\in\mathbb{F}_q^N$, where $\bm{e}_i$ is the $i$-th row vector of $E$. The ciphertext corresponding to $\bm{x}$ is computed as $\bm{y}=\bm{x}G_{pub}+\bm{e}$.
\item Decryption\\
For a ciphertext $\bm{y}\in\mathbb{F}_q^N$, let $\bm{e}'=\bm{e}T$ and compute
\begin{align*}
\bm{s}=\bm{y}T\widehat{H}_S^T=\bm{x}M\widehat{G}_ST^{-1}T\widehat{H}_S^T+\bm{e}T\widehat{H}_S^T=\bm{e}'\widehat{H}_S^T.
\end{align*} 
Applying $\mathscr{D}_{\widehat{\mathcal{G}}}$ to $\bm{s}$ will lead to a vector $\bm{e}''\in\mathbb{F}_q^{mn}$. The restriction of $\bm{e}''$ to $S$ will be $\bm{e}'$, then we can recover $\bm{e}$ by computing $\bm{e}'T^{-1}$. The plaintext will be the restriction of $\bm{y}-\bm{e}$ to the first $K$ coordinates.
\end{itemize}
\textbf{Correctness of Decryption.}
 Let $\bm{e}'=(\bm{e}'_1,\ldots,\bm{e}'_n)$, where $\bm{e}'_i=\bm{e}_iA$. Define $E'\in\mathcal{M}_{n,\lambda}(\mathbb{F}_q)$ to be a matrix whose $i$-th row vector is $\bm{e}'_i$. Let $\bm{e}''=(\bm{e}''_1,\ldots,\bm{e}''_n)$, where $\bm{e}''_i=(\bm{e}'_i,\bm{0})$ and $\bm{0}$ denotes the zero vector of length $m-\lambda$. Define $E''\in\mathcal{M}_{n, m}(\mathbb{F}_q)$ to be a matrix whose $i$-th row vector is $\bm{e}''_i$. It is easy to see that 
\begin{align*}
E''=[E'|0_{n\times (m-\lambda)}]=[EA|0_{n\times (m-\lambda)}].
\end{align*}
Then
\begin{align*}
\rank(E'')=\rank(E')=\rank(E)=t,
\end{align*}
which implies that $\bm{e}''$ satisfies the decodable condition described in Section \ref{condition}. Applying the fast decoder of $\widehat{\mathcal{G}}$ to $\bm{s}=\bm{e}'\widehat{H}_S^T=\bm{e}''\widehat{H}^T$ will lead to $\bm{e}''$, then the restriction of $\bm{e}''$ to $S$ will be $\bm{e}'$.

\subsection{Proposal \Rmnum{2}}
For a given security level, choose a finite field $\mathbb{F}_q$ and positive integers $\lambda\ll k<n\leqslant m$. Let $K=km$ and $N=nm$. Now we give a formal description of our second proposal through the following three procedures.

\begin{itemize}
\item Key generation\\
Let $\mathcal{G}\subseteq\mathbb{F}_{q^m}^n$ be an $[n,k]$ Gabidulin code. Randomly choose a basis $\mathcal{B}=(\alpha_1,\ldots,\alpha_m)$ of $\mathbb{F}_{q^m}$ over $\mathbb{F}_q$, and let $\widehat{\mathcal{G}}=\phi_\mathcal{B}(\mathcal{G})$ be an $[N,K]$ expanded code of $\mathcal{G}$. Let $\widehat{G}$ be a generator matrix of $\widehat{\mathcal{G}}$, and $\widehat{H}$ a parity-check matrix of the form $(\ref{paritycheck})$. Let $u_f=\lfloor\frac{n}{\lambda}\rfloor,u_c=\lceil\frac{n}{\lambda}\rceil$ and $v=n-\lambda u_f$. Randomly choose $A\in GL_{m\lambda}(\mathbb{F}_q)$ such that the $mv\times mv$ submatrix $A_{sub}$ in the top left corner of $A$ is invertible. Let
\[T=
\begin{pmatrix}
A_{ten}&\\
&A_{sub}
\end{pmatrix}\in GL_N(\mathbb{F}_q),
\] 
where $A_{ten}$ is the tensor product $I_{u_f}\otimes A$. If the first $K$ columns of $\widehat{G}T^{-1}$ are linearly independent over $\mathbb{F}_q$, then choose $M\in GL_K(\mathbb{F}_q)$ to convert $G_{pub}=M\widehat{G}T^{-1}$ into systematic form. Otherwise, one rechooses the matrix $T$. Then the public key is $(G_{pub},t)$ where $t=\lfloor\frac{n-k}{2\lambda}\rfloor$, and the private key is $(\widehat{H},A,\mathscr{D}_{\widehat{\mathcal{G}}})$ where $\mathscr{D}_{\widehat{\mathcal{G}}}$ is the fast decoder of $\widehat{\mathcal{G}}$.
\item Encryption\\
For a plaintext $\bm{x}\in\mathbb{F}_{q}^K$, randomly choose $E\in\mathcal{M}_{u_c, m\lambda}(\mathbb{F}_q)$ of the form
\begin{align}\label{matrixE}
E=
\begin{pmatrix}
\bm{e}_1&\cdots&\cdots&\cdots&\bm{e}_{\lambda-1}&\bm{e}_\lambda\\
\bm{e}_{1+\lambda}&\cdots&\cdots&\cdots&\bm{e}_{2\lambda-1}&\bm{e}_{2\lambda}\\
\vdots&\ddots&\ddots&\ddots&\vdots&\vdots\\
\bm{e}_{1+(u_c-1)\lambda}&\cdots&\bm{e}_n&\cdots&\bm{0}&\bm{0}\\
\end{pmatrix}
\end{align}
such that $\rank(E)=t$, where $\bm{e}_i\in\mathbb{F}_q^m$ for $1\leqslant i\leqslant n$. Let $\bm{e}=(\bm{e}_1,\ldots,\bm{e}_n)\in\mathbb{F}_q^N$, then the ciphertext corresponding to $\bm{x}$ is computed as $\bm{y}=\bm{x}G_{pub}+\bm{e}$.

\item Decryption\\
For a ciphertext $\bm{y}\in\mathbb{F}_q^N$, compute $\bm{s}=\bm{y}T\widehat{H}^T=\bm{e}T\widehat{H}^T$. Applying $\mathscr{D}_{\widehat{\mathcal{G}}}$ to $\bm{s}$ will lead to $\bm{e}'=\bm{e}T$, then one can recover $\bm{e}$ by computing $\bm{e}'T^{-1}$. The restriction of $\bm{y}-\bm{e}$ to the first $K$ coordinates will be the plaintext.
\end{itemize}
\textbf{Correctness of Decryption.} First, we introduce the following proposition.
\begin{proposition}\label{rank}
Given $\lambda$ matrices $M_1,M_2,\ldots,M_\lambda\in\mathcal{M}_{u, v}(\mathbb{F}_q)$, let 
\[
M=[M_1,M_2,\ldots,M_\lambda]
\ \tl{and}\ 
M'=
\begin{pmatrix}
M_1\\
M_2\\
\vdots\\
M_\lambda
\end{pmatrix}.
\]
Suppose $\rank(M)=t$, then there must be $\rank(M')\leqslant \lambda t$.
\end{proposition}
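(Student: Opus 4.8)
The plan is to exploit the column-space / row-space duality between $M$ and $M'$. Write $M_i \in \mathcal{M}_{u,v}(\mathbb{F}_q)$, so $M = [M_1, M_2, \ldots, M_\lambda] \in \mathcal{M}_{u, \lambda v}(\mathbb{F}_q)$ has $u$ rows, while $M' = [M_1^T, M_2^T, \ldots, M_\lambda^T]^T \in \mathcal{M}_{\lambda u, v}(\mathbb{F}_q)$ has $v$ columns. The key observation is that $\rank(M') = \dim_{\mathbb{F}_q}(\text{column space of } M')$, and every column of $M'$ is obtained by stacking the corresponding columns of $M_1, \ldots, M_\lambda$. So the column space of $M'$ is spanned by the $v$ vectors $\bm{w}_1, \ldots, \bm{w}_v \in \mathbb{F}_q^{\lambda u}$, where $\bm{w}_j = (c_j^{(1)}, c_j^{(2)}, \ldots, c_j^{(\lambda)})$ and $c_j^{(i)} \in \mathbb{F}_q^u$ denotes the $j$-th column of $M_i$.

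The main step is to bound $\dim \langle \bm{w}_1, \ldots, \bm{w}_v \rangle_{\mathbb{F}_q}$. First I would note that $\rank(M) = t$ means the column space of $M$ — which is spanned by all the columns $c_j^{(i)}$ for $1 \le i \le \lambda$, $1 \le j \le v$ — has dimension $t$. Fix a basis $\bm{b}_1, \ldots, \bm{b}_t \in \mathbb{F}_q^u$ of this column space. Then for each fixed block index $i$, every column $c_j^{(i)}$ is an $\mathbb{F}_q$-linear combination of $\bm{b}_1, \ldots, \bm{b}_t$; writing the coefficients out, this exhibits each "sub-vector" $c_j^{(i)}$ inside the $t$-dimensional space $\langle \bm{b}_1, \ldots, \bm{b}_t\rangle$. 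Consequently the stacked vector $\bm{w}_j$ lies in the $\lambda t$-dimensional subspace $\langle \bm{b}_1, \ldots, \bm{b}_t\rangle^{\oplus \lambda}$ of $\mathbb{F}_q^{\lambda u}$ (the block-diagonal copies). Since all $v$ spanning vectors $\bm{w}_j$ of the column space of $M'$ lie in this fixed space of dimension $\lambda t$, we get $\rank(M') = \dim\langle \bm{w}_1, \ldots, \bm{w}_v\rangle \le \lambda t$, as claimed.

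An equivalent and perhaps cleaner formulation I would use: choose $P \in GL_u(\mathbb{F}_q)$ so that $PM_i$ has all nonzero rows confined to the same set of $t$ row positions simultaneously for all $i$ (possible because $\rank[M_1,\ldots,M_\lambda] = t$ forces the row spaces of the $M_i$ to lie in a common $t$-dimensional subspace of $\mathbb{F}_q^{v}$ — wait, here it is the column spaces that are relevant, so one conjugates on the left). Then $\rank(M_i') $ is unchanged under left-multiplication of each block by the same $P$, and $PM' = [(PM_1)^T,\ldots,(PM_\lambda)^T]^T$ has all its nonzero entries in $\lambda t$ rows, hence rank at most $\lambda t$.

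I do not anticipate a serious obstacle here; the statement is essentially a bookkeeping fact about stacking matrices whose columns live in a common small subspace. The one place to be careful is making sure the inequality is stated in the right direction and that one tracks column spaces (not row spaces) consistently, since $M$ is a horizontal concatenation and $M'$ a vertical one — it is easy to transpose the argument by accident. A sanity check: if $\lambda = 1$ then $M = M' = M_1$ and the bound reads $\rank(M') \le \rank(M)$, an equality, which is consistent; and if the $M_i$ are all equal to a single rank-$t$ matrix, $M'$ is that matrix stacked $\lambda$ times and indeed has rank exactly $t \le \lambda t$.
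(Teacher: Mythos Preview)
Your argument is correct. However, the paper's own proof takes a shorter, more direct route: since each $M_i$ is a block of the horizontal concatenation $M$, one has $\rank(M_i)\leqslant\rank(M)=t$ immediately; and since the row space of the vertical stack $M'$ is the sum of the row spaces of the $M_i$, one concludes $\rank(M')\leqslant\sum_{i=1}^{\lambda}\rank(M_i)\leqslant\lambda t$ in one line. Your column-space approach---showing that every column of $M'$ lies in the $\lambda t$-dimensional subspace $\langle\bm{b}_1,\ldots,\bm{b}_t\rangle^{\oplus\lambda}\subseteq\mathbb{F}_q^{\lambda u}$---is a valid dual formulation that yields the same bound, but it requires choosing a basis and tracking block coordinates, which is more machinery than the problem demands. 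Your second formulation via an invertible $P$ is also fine in spirit, though note that the left multiplication you want is $(I_\lambda\otimes P)M'$ rather than ``$PM'$'', since $P$ has size $u$ and $M'$ has $\lambda u$ rows.
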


\begin{proof}
Note that $\rank(M)=t$, then $\rank(M_i)\leqslant t$ for $1\leqslant i\leqslant \lambda$. Hence $\rank(M')\leqslant\sum_{i=1}^\lambda\rank(M_i)\leqslant \lambda t$.
\end{proof}

From the decrypting process of Proposal \Rmnum{2}, it suffices to prove that $\bm{e}'$ satisfies the decodable condition described in Section \ref{condition}. Let $\bm{e}'=(\bm{e}'_1,\bm{e}'_2,\ldots,\bm{e}'_n)$, where $\bm{e}'_i\in\mathbb{F}_q^m$ for $1\leqslant i\leqslant n$, then there exist $\bm{e}'_{n+1},\ldots,\bm{e}'_{u_c\lambda}\in\mathbb{F}_q^m$ such that
\[E'=
\begin{pmatrix}
\bm{e}'_1&\cdots&\cdots&\cdots&\bm{e}'_{\lambda-1}&\bm{e}'_\lambda\\
\bm{e}'_{1+\lambda}&\cdots&\cdots&\cdots&\bm{e}'_{2\lambda-1}&\bm{e}'_{2\lambda}\\
\vdots&\ddots&\ddots&\ddots&\vdots&\vdots\\
\bm{e}'_{1+(u_c-1)\lambda}&\cdots&\bm{e}'_n&\bm{e}'_{n+1}&\cdots&\bm{e}'_{u_c\lambda}\\
\end{pmatrix}=EA=
[E'_1,E'_2,\ldots,E'_\lambda],
\]
where $E'_i\in\mathcal{M}_{u_c, m}(\mathbb{F}_q)$ for $1\leqslant i\leqslant \lambda$. Apparently we have $\rank(E')=\rank(E)=t$.
Let 
\[F=
\begin{pmatrix}
\bm{e}'_1\\
\bm{e}'_2\\
\vdots\\
\bm{e}'_n
\end{pmatrix}\ \tl{and}\ 
F'=
\begin{pmatrix}
E'_1\\
E'_2\\
\vdots\\
E'_\lambda
\end{pmatrix},
\]
then $\rank(F)\leqslant\rank(F')\leqslant \lambda t\leqslant \lfloor\frac{n-k}{2}\rfloor$ by Proposition \ref{rank}.

\begin{remark}
The cryptosystem presented above deals with the general situation where $\lambda$ does not divide $n$, or equivalently $u_f\neq u_c$. As for the case of $u_f=u_c$, just a few changes are needed in the key generation procedure. To generate the column scrambling matrix $T^{-1}$, any $A\in GL_{m\lambda}(\mathbb{F}_q)$ is feasible for computing $T=I_{u_f}\otimes A$.
\end{remark}

\section{Security analysis}\label{section6}
This section mainly discusses the security of the proposed cryptosystems. Attacks on code-based cryptosystems can be divided into two categories, namely structural attacks and generic attacks. A structural attack aims to recover the structure of the underlying code from the published information, which amounts to recovering the private key or its equivalent form that can be used to decrypt any valid ciphertext in polynomial time. A generic attack is to recover the plaintext directly without knowing the private key, which implies that one has to deal with the underlying hard problem.

In the remainder of this section, we will first introduce a distinguisher for Gabidulin codes and expalin why our proposals can prevent the existing structural attacks. Following this, we introduce the concept of twisted Frobenius power and build a distinguisher for expanded Gabidulin code, which provides an approach for us to distinguish expanded Gabidulin codes from random linear codes. After that, we will investigate the practical security of our proposals from two aspects.

\subsection{Existing structural attacks}\label{section5.1}
Now we describe some properties of Gabidulin codes under the Frobenius map, which will be useful for us to explain why our proposals can prevent the related structural attacks. The following two propositions provide an approach for us to distinguish Gabidulin codes from general ones.

\begin{proposition}\cite{gaborit2018polynomial}\label{proposition4}
Let $\mathcal{G}\subseteq\mathbb{F}_{q^m}^n$ be an $[n,k]$ Gabidulin code. For any positive integer $i$, the following equality holds
\begin{align*}
\dim(\mathcal{G}+\mathcal{G}^{[1]}+\cdots+\mathcal{G}^{[i]})=\min\{n,k+i\}.
\end{align*}
\end{proposition}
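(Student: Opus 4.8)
The plan is to exploit the defining structure of the generator matrix of $\mathcal{G}$. Recall that $\tl{Gab}_{n,k}(\bm{g})$ has generator matrix whose rows are $\bm{g}, \bm{g}^{[1]}, \ldots, \bm{g}^{[k-1]}$. The key observation is that applying the Frobenius map $[1]$ to a codeword $\bm{c} = \sum_{j=0}^{k-1} a_j \bm{g}^{[j]}$ gives $\bm{c}^{[1]} = \sum_{j=0}^{k-1} a_j^{[1]} \bm{g}^{[j+1]}$, since the Frobenius map is $\mathbb{F}_q$-semilinear and acts on each coordinate. Hence $\mathcal{G}^{[1]}$ is exactly the Gabidulin-type code spanned by $\bm{g}^{[1]}, \ldots, \bm{g}^{[k]}$, and more generally $\mathcal{G}^{[i]}$ is spanned by $\bm{g}^{[i]}, \ldots, \bm{g}^{[k-1+i]}$. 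Therefore the sum $\mathcal{G} + \mathcal{G}^{[1]} + \cdots + \mathcal{G}^{[i]}$ is spanned by $\bm{g}^{[0]}, \bm{g}^{[1]}, \ldots, \bm{g}^{[k-1+i]}$, a set of $k+i$ vectors.

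The first step is to make the above span computation precise: show $\mathcal{G} + \cdots + \mathcal{G}^{[i]} = \langle \bm{g}^{[0]}, \ldots, \bm{g}^{[k+i-1]} \rangle_{\mathbb{F}_{q^m}}$, which gives the upper bound $\dim(\mathcal{G} + \cdots + \mathcal{G}^{[i]}) \leqslant k+i$; combined with the trivial bound $\leqslant n$ (the whole code lives in $\mathbb{F}_{q^m}^n$), this yields $\leqslant \min\{n, k+i\}$. The second step — the substantive one — is the matching lower bound: one must show that the vectors $\bm{g}^{[0]}, \ldots, \bm{g}^{[k+i-1]}$ are $\mathbb{F}_{q^m}$-linearly independent as long as $k+i \leqslant n$. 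This is precisely the statement that a $(k+i) \times n$ matrix of the Gabidulin/Moore form built from $\bm{g}$ has full row rank, which holds exactly because $w_R(\bm{g}) = n$, i.e.\ the coordinates $g_1, \ldots, g_n$ are $\mathbb{F}_q$-linearly independent — equivalently, one can invoke Theorem \ref{mrd} (the MRD property) applied to the Gabidulin code $\tl{Gab}_{n,k+i}(\bm{g})$, whose dimension is $k+i$ precisely when $k+i \leqslant n$. When $k+i > n$, the span is forced to be all of $\mathbb{F}_{q^m}^n$ by a dimension count together with the independence of the first $n$ of these vectors, giving dimension $n = \min\{n, k+i\}$.

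The main obstacle is the linear independence claim for the Moore-type rows when $k + i \leqslant n$: one cannot simply say "it is a Vandermonde-like matrix" since the entries involve the Frobenius automorphism rather than ordinary powers. The clean way around this is to note that $\tl{Gab}_{n,k+i}(\bm{g})$ is by definition an $[n, k+i]$ Gabidulin code (its generator matrix is exactly the Moore matrix on rows $\bm{g}^{[0]}, \ldots, \bm{g}^{[k+i-1]}$), and Definition \ref{definition1} already presupposes that this matrix has rank $k+i$; alternatively, this independence is the standard fact that $q$-linearized polynomials of $q$-degree at most $k+i-1$ that vanish on an $\mathbb{F}_q$-independent set of $n \geqslant k+i$ points must be zero. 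Either route closes the argument; I would cite the Gabidulin-code definition together with Theorem \ref{mrd} to keep the proof short, so the whole argument reduces to the span identification in step one plus this one-line appeal.
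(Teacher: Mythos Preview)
The paper does not actually supply a proof of this proposition: it is quoted with a citation to \cite{gaborit2018polynomial} and left unproved. Your argument is correct and is precisely the standard one found in that reference --- identify $\mathcal{G}^{[j]}$ with the span of $\bm{g}^{[j]},\ldots,\bm{g}^{[k-1+j]}$, so that the sum is spanned by the $k+i$ Moore rows $\bm{g}^{[0]},\ldots,\bm{g}^{[k+i-1]}$, and then invoke the nonvanishing of the Moore determinant (equivalently, that $\tl{Gab}_{n,k+i}(\bm{g})$ really has dimension $k+i$ whenever $k+i\leqslant n$) to conclude.
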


\begin{proposition}\cite{gaborit2018polynomial}\label{proposition5}
Let $\mathcal{C}\subseteq\mathbb{F}_{q^m}^n$ be an $[n,k]$ random linear code. For any positive integer $i$, the following equality holds with high probability
\begin{align*}
\dim(\mathcal{C}+\mathcal{C}^{[1]}+\cdots+\mathcal{C}^{[i]})=\min\{n,k(i+1)\}.
\end{align*}
\end{proposition}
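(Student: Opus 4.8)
The plan is to establish the equality $\dim(\mathcal{C}+\mathcal{C}^{[1]}+\cdots+\mathcal{C}^{[i]})=\min\{n,k(i+1)\}$ for a random $[n,k]$ code by a dimension-counting argument that shows the ``generic'' answer is the sum of the dimensions, capped by the ambient dimension $n$. First I would fix a generator matrix $G\in\mathcal{M}_{k,n}(\mathbb{F}_{q^m})$ of $\mathcal{C}$ whose entries are independent uniform elements of $\mathbb{F}_{q^m}$ (one may also condition on $G$ having rank $k$, which holds with high probability). Then $\mathcal{C}^{[j]}$ is generated by $G^{[j]}$, the entrywise $q^j$-th power of $G$, and the sum $\mathcal{C}+\mathcal{C}^{[1]}+\cdots+\mathcal{C}^{[i]}$ is the row space of the stacked $k(i+1)\times n$ matrix
\[
\widetilde{G}=
\begin{pmatrix}
G\\ G^{[1]}\\ \vdots\\ G^{[i]}
\end{pmatrix}.
\]
The claim is then simply that $\rank(\widetilde{G})=\min\{n,k(i+1)\}$ with high probability over the choice of $G$.

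The key steps are: (i) when $k(i+1)\le n$, show that the rows of $\widetilde{G}$ are $\mathbb{F}_{q^m}$-linearly independent with high probability; (ii) when $k(i+1)>n$, show that $\widetilde{G}$ has full column rank $n$ with high probability, which already follows from (i) applied to the first $\lceil n/k\rceil$ blocks (or directly, since a submatrix of full rank forces $\rank=n$). For (i), the natural route is to bound the probability that the rows are dependent: a dependency is a nonzero vector $(\lambda_{0,1},\dots,\lambda_{i,k})$ in the left kernel, and one estimates the number of such $G$ column by column. Since the map $x\mapsto (x,x^{[1]},\dots,x^{[i]})$ takes values in $\mathbb{F}_{q^m}$ and a fixed nonzero $\mathbb{F}_{q^m}$-linear combination of $x^{[0]},\dots,x^{[i]}$ is a nonzero $q$-polynomial (linearized polynomial) in $x$ of $q$-degree at most $i<m$, it has at most $q^i$ roots in $\mathbb{F}_{q^m}$; hence each column kills a proposed dependency with probability at most $q^i/q^m=q^{i-m}$, and independence across the $n$ columns gives a failure probability at most (number of candidate dependency directions)$\times q^{(i-m)n}$, which is exponentially small once $n$ is not too small relative to $m$. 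Summing over a spanning set of possible dependencies (or over the Gaussian-binomial count of subspaces, as the paper does elsewhere for $\tl P_{\mathbb{F}_q}$) yields the ``with high probability'' conclusion.

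I expect the main obstacle to be the linear-algebra bookkeeping in step (i): one must argue that a generic $\mathbb{F}_{q^m}$-linear dependency among the rows of all the blocks $G^{[j]}$ cannot be forced for \emph{every} column simultaneously, and the cleanest way is precisely the observation that the rows of $\widetilde G$, read columnwise, are evaluations of the $k(i+1)$ linearized monomials $x\mapsto x^{[0]},\dots,x^{[i]}$ scaled by the independent entries of the respective block rows of $G$ — so any fixed linear relation among the rows restricts, on each column, to a nonzero linearized polynomial identity of bounded $q$-degree, which generic columns violate. The condition $i<m$ (implicit since the interesting range for the distinguisher is $k+i\le n\le m$) is what guarantees these linearized polynomials are nonzero and have few roots; this is the same phenomenon underlying Proposition \ref{proposition4}, and the contrast between the two propositions is exactly that a Gabidulin generator matrix is the \emph{structured} choice for which the blocks overlap maximally, collapsing the rank to $k+i$, whereas a random $G$ achieves the generic rank $k(i+1)$. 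A minor secondary point is to state precisely what ``with high probability'' means (probability $1-o(1)$ as $q^m\to\infty$, or an explicit bound like $1-q^{-\Omega(n(m-i))}$), and to note the result is vacuous/trivial when $k(i+1)\ge n$ and $G$ already has rank $n$.
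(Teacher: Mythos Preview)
The paper does not prove this proposition at all: it is quoted verbatim from \cite{gaborit2018polynomial} and carries no proof in the present manuscript. In that reference, and throughout the rank-metric cryptography literature, the statement is treated as a heuristic supported by experiments rather than as a theorem with an explicit probability bound; note the deliberately vague phrasing ``with high probability.'' So there is nothing to compare your argument to on the paper's side.

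As for your proposal itself, the setup and the per-column estimate are correct and are exactly the right intuition: for a fixed nonzero tuple $(\lambda_{j,l})$, the relation $\sum_{j,l}\lambda_{j,l}G_{l,c}^{[j]}=0$ on a given column $c$ forces one of the independent entries $G_{l,c}$ to lie in a coset of the kernel of a nonzero linearized polynomial of $q$-degree at most $i$, hence holds with probability at most $q^{i-m}$, and columns are independent. The weak point is the step you label ``summing over a spanning set of possible dependencies.'' A naive union bound over the roughly $q^{m(k(i+1)-1)}$ projective directions of $(\lambda_{j,l})$ multiplied by $q^{n(i-m)}$ gives an exponent $m(k(i+1)-1)+n(i-m)$, which is \emph{not} negative in the interesting regime $k(i+1)\approx n\le m$; for instance, at $k(i+1)=n$ it reduces to $ni-m+1$, which is typically large and positive. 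So the union bound as stated does not close, and your ``which is exponentially small once $n$ is not too small relative to $m$'' is not justified. To turn this into a proof one needs either a second-moment/Schwartz--Zippel argument on a full-size minor of $\widetilde{G}$ (showing the determinant is a nonzero polynomial and bounding its vanishing probability), or to accept the statement at the same heuristic level the paper and its reference do.
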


Most cryptosystems based on Gabidulin codes have been shown to be insecure due to their vulnerability against structural attacks, such as Overbeck's attack\cite{2008Structural}, Coggia-Couvreu attack\cite{coggia2020security} and the attack proposed in \cite{gaborit2018polynomial}. Although these attacks were designed to cryptanalyze different variants, most of them rely on the fact that one can distinguish Gabidulin codes from general ones by observing how their dimensions behave under the Frobenius map according to Proposition \ref{proposition4} and \ref{proposition5}. However, this property is no longer valid when considering our proposals. Since our proposals are built over the base field $\mathbb{F}_q$, it is clear that $\widehat{\mathcal{G}}^{[i]}=\widehat{\mathcal{G}}$ for any integer $i$. In this situation, Gabidulin codes will be indistinguishable from random codes. Hence it is reasonable to conclude that all these attacks do not work on our proposals.

\subsection{A distinguisher for expanded Gabidulin codes}
Let $\mathcal{G}\subseteq\mathbb{F}_{q^m}^n$ be an $[n,k]$ Gabidulin code, and $\widehat{\mathcal{G}}$ an expanded code of $\mathcal{G}$ with respect to a basis $\mathcal{B}$ of $\mathbb{F}_{q^m}$ over $\mathbb{F}_q$. Given a generator matrix of $\widehat{\mathcal{G}}$, Berger et al. \cite{berger2017gabidulin} proposed an efficient approach to compute $\mathcal{G}^{[s]}$ for some $0\leqslant s\leqslant m-1$, which is also an $[n,k]$ Gabidulin code. A key point is that one can recover a normal generator matrix of $\widehat{\mathcal{G}}$, which can be done by reducing any generator matrix of $\widehat{\mathcal{G}}$ into systematic form. Hence the direct application of expanded Gabidulin codes will lead to an insecure scheme. Indeed, one can obtain more information from a generator matrix of $\widehat{\mathcal{G}}$. For instance, one can recover the expanded code of $\mathcal{G}^{[s]}$ without knowing the basis $\mathcal{B}$. To do this, we need to introduce the concept of twisted Frobenius power.

Let $G\in\mathcal{M}_{K,N}(\mathbb{F}_q)$, where $K=km$ and $N=nm$. For $1\leqslant i\leqslant k$ and $1\leqslant j\leqslant n$, let $I_i=\{(i-1)m+1,\ldots,im\}$ and $J_j=\{(j-1)m+1,\ldots,jm\}$. Denote by $G_{ij}$ the submatrix of $G$ from the rows indexed by $I_i$ and columns indexed by $J_j$. For a positive integer $s$, we define $G^{(s)}=(G_{ij}^{q^s})$ to be the $s$-th twisted Frobenius power of $G$, where $G_{ij}^{q^s}$ denotes the usual $q^s$-th power of $G_{ij}$. For a sequence $\bm{s}=(1,2,\ldots,N)$, we partition $\bm{s}$ into $n$ blocks, each of which has length $m$. Let $\mathcal{C}$ be an $[n,k]$ linear code over $\mathbb{F}_{q^m}$, and $\widehat{\mathcal{C}}$ an $[N,K]$ expanded code of $\mathcal{C}$. Let $I_1,\ldots,I_k$ be $k$ blocks of $\bm{s}$, if $I=\cup_{i=1}^kI_i$ forms an information set of $\widehat{\mathcal{C}}$, then we call $I$ a block information set of $\widehat{\mathcal{C}}$. It is clear that $\widehat{\mathcal{C}}$ admits at least one block information set. In the sequel, when we talk about a block information set of a linear code, we always mean the first one in lexicographic order. Let $I$ be a block information set of $\widehat{\mathcal{C}}$, and $G_I\in\mathcal{M}_{K,N}(\mathbb{F}_q)$ a generator matrix of $\widehat{\mathcal{C}}$, where the submatrix of $G_I$ from the columns indexed by $I$ forms an identity matrix of order $K$. Then we define the $s$-th twisted Frobenius power of $\widehat{\mathcal{C}}$ as $\widehat{\mathcal{C}}^{(s)}=\langle G_I^{(s)}\rangle_{\mathbb{F}_q}$. It is easy to see that $\widehat{\mathcal{C}}^{(s)}$ is an expanded code of $\mathcal{C}^{[s]}$ with respect to $\mathcal{B}$ and does not rely on the choice of the block information set.

The concept of twisted Frobenius power actually provides an approach for us to compute the expanded code of $\mathcal{C}^{[s]}$ even if the corresponding basis is not known. Furthermore, we have the following two propositions, which describe an effective distinguisher for expanded Gabidulin codes from an expanded code of a random one.

\begin{proposition}\label{egabidulin}\label{proposition7}
Let $\mathcal{G}\subseteq\mathbb{F}_{q^m}^n$ be an $[n,k]$ Gabidulin code, and $\widehat{\mathcal{G}}\subseteq\mathbb{F}_q^N$ an expanded code of $\mathcal{G}$ with respect to a basis $\mathcal{B}$ of $\mathbb{F}_{q^m}$ over $\mathbb{F}_q$. For any positive integer $i$, the following equality holds
\begin{align*}
\dim(\widehat{\mathcal{G}}+\widehat{\mathcal{G}}^{(1)}+\cdots+\widehat{\mathcal{G}}^{(i)})=\min\{N,(k+i)m\}.
\end{align*}
\end{proposition}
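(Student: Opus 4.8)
The plan is to reduce the statement about expanded codes over $\mathbb{F}_q$ to the corresponding statement about Gabidulin codes over $\mathbb{F}_{q^m}$, where Proposition \ref{proposition4} already gives the answer. The first step is to unravel the definition of $\widehat{\mathcal{G}}^{(s)}$: starting from a normal generator matrix $\widehat{G}=\Phi_\mathcal{B}(G)$ of $\widehat{\mathcal{G}}$ (Proposition \ref{theorem}(1)), one checks that applying the $s$-th twisted Frobenius power block-by-block produces exactly $\Phi_\mathcal{B}(G^{[s]})$, so that $\widehat{\mathcal{G}}^{(s)}=\phi_\mathcal{B}(\mathcal{G}^{[s]})$; this was already asserted in the text just before the proposition, so I would only spell out the block-indexing bookkeeping, namely that the $(i,j)$-block of $\widehat{G}$ is $\Phi_\mathcal{B}(g_j^{[i-1]})$ and raising it to the $q^s$ power inside $\mathbb{F}_q$ commutes with the coordinate map $\phi_\mathcal{B}$ in the sense that $\phi_\mathcal{B}(\alpha)^{q^s}$ is the $\phi_\mathcal{B}$-image of $\alpha^{[s]}$ \emph{after} a fixed $\mathbb{F}_q$-linear reindexing of the basis. (One must be slightly careful here: $\phi_\mathcal{B}(\alpha^{[s]})\neq\phi_\mathcal{B}(\alpha)^{q^s}$ literally, but the map $\bm{a}\mapsto\bm{a}^{q^s}$ on $\mathbb{F}_q^m$ is the identity, so in fact $\widehat{\mathcal{G}}^{(s)}$ being an expanded code of $\mathcal{G}^{[s]}$ follows from the structural identity $G_{ij}^{q^s}=\Phi_\mathcal{B}(g_j^{[i-1]})^{q^s}$ and the fact that $M_\alpha^{q^s}=M_{\alpha^{[s]}}$ for the multiplication matrices — this is the one identity I would verify explicitly.)

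Next I would pass from sums of expanded codes to expanded codes of sums. The key observation is that $\phi_\mathcal{B}$ is an $\mathbb{F}_q$-linear isomorphism $\mathbb{F}_{q^m}^n\to\mathbb{F}_q^{mn}$, hence for any family of $\mathbb{F}_{q^m}$-linear (in particular $\mathbb{F}_q$-linear) subcodes $\mathcal{C}_0,\dots,\mathcal{C}_i\subseteq\mathbb{F}_{q^m}^n$ one has
\[
\phi_\mathcal{B}(\mathcal{C}_0)+\cdots+\phi_\mathcal{B}(\mathcal{C}_i)=\phi_\mathcal{B}\big(\mathcal{C}_0+\cdots+\mathcal{C}_i\big),
\]
where the left-hand sums are taken as $\mathbb{F}_q$-subspaces of $\mathbb{F}_q^{mn}$ and the right-hand sum as an $\mathbb{F}_q$-subspace of $\mathbb{F}_{q^m}^n$. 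Applying this with $\mathcal{C}_j=\mathcal{G}^{[j]}$ and using Step 1 gives
\[
\widehat{\mathcal{G}}+\widehat{\mathcal{G}}^{(1)}+\cdots+\widehat{\mathcal{G}}^{(i)}
=\phi_\mathcal{B}\big(\mathcal{G}+\mathcal{G}^{[1]}+\cdots+\mathcal{G}^{[i]}\big).
\]
Since $\phi_\mathcal{B}$ is an isomorphism of $\mathbb{F}_q$-vector spaces, it multiplies $\mathbb{F}_q$-dimension by $m$: if $W\subseteq\mathbb{F}_{q^m}^n$ is an $\mathbb{F}_q$-subspace of $\mathbb{F}_q$-dimension $d$, then $\dim_{\mathbb{F}_q}\phi_\mathcal{B}(W)=d$; but here I want the $\mathbb{F}_{q^m}$-dimension of $W=\mathcal{G}+\cdots+\mathcal{G}^{[i]}$, which equals $\tfrac1m\dim_{\mathbb{F}_q}(W)$ because $W$ is itself $\mathbb{F}_{q^m}$-linear (each $\mathcal{G}^{[j]}$ is). Hence $\dim_{\mathbb{F}_q}\widehat{\mathcal{G}}+\cdots+\widehat{\mathcal{G}}^{(i)}=m\cdot\dim_{\mathbb{F}_{q^m}}\!\big(\mathcal{G}+\cdots+\mathcal{G}^{[i]}\big)$.

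Finally, I would plug in Proposition \ref{proposition4}, which gives $\dim_{\mathbb{F}_{q^m}}(\mathcal{G}+\mathcal{G}^{[1]}+\cdots+\mathcal{G}^{[i]})=\min\{n,k+i\}$, and conclude
\[
\dim(\widehat{\mathcal{G}}+\widehat{\mathcal{G}}^{(1)}+\cdots+\widehat{\mathcal{G}}^{(i)})=m\cdot\min\{n,k+i\}=\min\{mn,(k+i)m\}=\min\{N,(k+i)m\},
\]
as claimed. The main obstacle is the first step — cleanly justifying that $\widehat{\mathcal{G}}^{(s)}$, defined via a \emph{block information set} and a systematic generator matrix rather than via the normal generator matrix $\Phi_\mathcal{B}(G)$, coincides with $\phi_\mathcal{B}(\mathcal{G}^{[s]})$; once one knows the twisted Frobenius power is independent of the choice of block information set (asserted in the text) it suffices to evaluate it on the normal generator matrix, and then everything reduces to the identity $M_\alpha^{q^s}=M_{\alpha^{[s]}}$ together with $\mathbb{F}_q$-linear algebra. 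Everything after that is routine dimension counting.
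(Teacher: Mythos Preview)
Your proposal is correct and follows essentially the same route as the paper: the paper's proof simply invokes the identity $\widehat{\mathcal{G}}^{(s)}=\phi_{\mathcal{B}}(\mathcal{G}^{[s]})$ (asserted just before the proposition), uses $\mathbb{F}_q$-linearity of $\phi_\mathcal{B}$ to pass the sum inside, and then applies Proposition~\ref{proposition4} together with the factor-of-$m$ dimension relation. You spell out more carefully the justification of $\widehat{\mathcal{G}}^{(s)}=\phi_{\mathcal{B}}(\mathcal{G}^{[s]})$ via $M_\alpha^{q^s}=M_{\alpha^{[s]}}$, which the paper leaves as an ``easy to see'' remark; note only that to match the definition of $G_I^{(s)}$ you should take $G$ to be a \emph{systematic} generator matrix of $\mathcal{G}$ (so that $\Phi_\mathcal{B}(G)$ itself has the identity on a block information set), rather than the Moore-type matrix of Definition~\ref{definition1}.
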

\begin{proof}
Note that $\widehat{\mathcal{G}}^{(i)}=\phi_{\mathcal{B}}(\mathcal{G}^{[i]})$, then
\begin{align*}
\widehat{\mathcal{G}}+\widehat{\mathcal{G}}^{(1)}+\cdots+\widehat{\mathcal{G}}^{(i)}&=\phi_{\mathcal{B}}(\mathcal{G})+\phi_{\mathcal{B}}(\mathcal{G}^{[1]})+\cdots+\phi_{\mathcal{B}}(\mathcal{G}^{[i]})\\
&=\phi_{\mathcal{B}}(\mathcal{G}+\mathcal{G}^{[1]}+\cdots+\mathcal{G}^{[i]}).
\end{align*}
It follows that
\[\dim(\widehat{\mathcal{G}}+\widehat{\mathcal{G}}^{(1)}+\cdots+\widehat{\mathcal{G}}^{(i)})=m\dim(\mathcal{G}+\mathcal{G}^{[1]}+\cdots+\mathcal{G}^{[i]}),\]
which leads to the conclusion immediately from Propostion \ref{proposition4}.
\end{proof}

\begin{proposition}\label{proposition8}
Let $\widehat{\mathcal{C}}\subseteq\mathbb{F}_q^N$ be an expanded code of an $[n,k]$ random code $\mathcal{C}\subseteq\mathbb{F}_{q^m}^n$. For any positive integer $i$, the following equality holds with high probability
\begin{align*}
\dim(\widehat{\mathcal{C}}+\widehat{\mathcal{C}}^{(1)}+\cdots+\widehat{\mathcal{C}}^{(i)})=\min\{N,k(i+1)m\}.
\end{align*}
\end{proposition}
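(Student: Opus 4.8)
The plan is to mirror the proof of Proposition \ref{proposition7}, with Proposition \ref{proposition5} taking the place of Proposition \ref{proposition4}. The only structural input needed is the identity already recorded in the paragraph introducing twisted Frobenius powers: for \emph{any} $[n,k]$ linear code $\mathcal{C}\subseteq\mathbb{F}_{q^m}^n$ and any $s$, the $s$-th twisted Frobenius power of the expanded code $\widehat{\mathcal{C}}=\phi_{\mathcal{B}}(\mathcal{C})$ satisfies $\widehat{\mathcal{C}}^{(s)}=\phi_{\mathcal{B}}(\mathcal{C}^{[s]})$, independently of the chosen block information set. This fact uses nothing but the assumption that $\widehat{\mathcal{C}}$ is an expanded code, so it is available here even though $\mathcal{C}$ is random rather than a Gabidulin code.

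First I would use the $\mathbb{F}_q$-linearity of $\phi_{\mathcal{B}}$ together with the above identity to write
\[
\widehat{\mathcal{C}}+\widehat{\mathcal{C}}^{(1)}+\cdots+\widehat{\mathcal{C}}^{(i)}
=\phi_{\mathcal{B}}(\mathcal{C})+\phi_{\mathcal{B}}(\mathcal{C}^{[1]})+\cdots+\phi_{\mathcal{B}}(\mathcal{C}^{[i]})
=\phi_{\mathcal{B}}\bigl(\mathcal{C}+\mathcal{C}^{[1]}+\cdots+\mathcal{C}^{[i]}\bigr).
\]
Next I would observe that each $\mathcal{C}^{[s]}$ is $\mathbb{F}_{q^m}$-linear, since the Frobenius map is an automorphism of $\mathbb{F}_{q^m}$ and hence invertible, so $\mathcal{C}+\mathcal{C}^{[1]}+\cdots+\mathcal{C}^{[i]}$ is an $\mathbb{F}_{q^m}$-linear code, say of $\mathbb{F}_{q^m}$-dimension $d$. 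As $\phi_{\mathcal{B}}$ is an $\mathbb{F}_q$-linear isomorphism $\mathbb{F}_{q^m}^n\to\mathbb{F}_q^N$ and an $\mathbb{F}_{q^m}$-subspace of $\mathbb{F}_{q^m}$-dimension $d$ has $\mathbb{F}_q$-dimension $dm$, it follows that
\[
\dim\bigl(\widehat{\mathcal{C}}+\widehat{\mathcal{C}}^{(1)}+\cdots+\widehat{\mathcal{C}}^{(i)}\bigr)
=m\,\dim\bigl(\mathcal{C}+\mathcal{C}^{[1]}+\cdots+\mathcal{C}^{[i]}\bigr).
\]
Finally, since $\mathcal{C}$ is a random $[n,k]$ code, Proposition \ref{proposition5} gives $\dim(\mathcal{C}+\mathcal{C}^{[1]}+\cdots+\mathcal{C}^{[i]})=\min\{n,k(i+1)\}$ with high probability, and multiplying through by $m$ yields $\min\{nm,k(i+1)m\}=\min\{N,k(i+1)m\}$, which is the claim. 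The qualifier ``with high probability'' is exactly the one inherited from Proposition \ref{proposition5}; this is why, unlike Proposition \ref{proposition7}, the equality here cannot be made unconditional.

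The computation above is entirely routine, so the step I would flag as the only non-routine one is the identity $\widehat{\mathcal{C}}^{(s)}=\phi_{\mathcal{B}}(\mathcal{C}^{[s]})$ together with its independence of the block information set; for this I would simply invoke the discussion preceding Proposition \ref{proposition7}. If one wanted a self-contained argument, the crux is that a generator matrix of $\widehat{\mathcal{C}}$ in systematic form with respect to a block information set $I$ is of the shape $\Phi_{\mathcal{B}}(G)$ for a systematic generator matrix $G$ of $\mathcal{C}$ (because $\Phi_{\mathcal{B}}(1)=I_m$ and $\Phi_{\mathcal{B}}(0)=0_{m\times m}$), and that raising an $m\times m$ block $\Phi_{\mathcal{B}}(\alpha)$ (the matrix of multiplication by $\alpha$ in the basis $\mathcal{B}$) to its $q^s$-th matrix power gives the matrix of multiplication by $\alpha^{q^s}$, i.e.\ $\Phi_{\mathcal{B}}(\alpha^{q^s})$; hence $G_I^{(s)}=\Phi_{\mathcal{B}}(G^{q^s})$ with $G^{q^s}$ the entrywise $q^s$-th power of $G$, a generator matrix of $\mathcal{C}^{[s]}$, so by Proposition \ref{theorem} it generates $\phi_{\mathcal{B}}(\mathcal{C}^{[s]})$ regardless of which block information set produced $G_I$.
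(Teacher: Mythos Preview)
Your proposal is correct and follows essentially the same approach as the paper: the paper's proof simply says ``Similar to the proof of Proposition~\ref{egabidulin}, the conclusion is obtained immediately from Proposition~\ref{proposition5},'' and your argument is exactly the spelled-out version of that---use $\widehat{\mathcal{C}}^{(s)}=\phi_{\mathcal{B}}(\mathcal{C}^{[s]})$, push the sum through $\phi_{\mathcal{B}}$, multiply the $\mathbb{F}_{q^m}$-dimension by $m$, and invoke Proposition~\ref{proposition5}. Your additional self-contained justification of the identity $\widehat{\mathcal{C}}^{(s)}=\phi_{\mathcal{B}}(\mathcal{C}^{[s]})$ goes beyond what the paper provides (it only asserts this fact in the paragraph before Proposition~\ref{proposition7}), but this is extra detail rather than a different route.
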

\begin{proof}
Similar to the proof of Proposition \ref{egabidulin}, the conclusion is obtained immediately from Proposition \ref{proposition5}.
\end{proof}

An expanded code always has a block information set. However, a random code may not have one, such as a linear code $\mathcal{C}$ generated by a matrix $G$ that has a full-zero column in each block. If a linear code has a block information set, then we define its twisted Frobenius power following the way described above. Otherwise, we choose a generator matrix of $\mathcal{C}$ at random, say $G$, and compute $\mathcal{C}^{(s)}=\langle G^{(s)}\rangle_{\mathbb{F}_q}$. In both situations, however, $\mathcal{C}^{(s)}$ generally depends on the choice of the block information set or the generator matrix $G$. Furthermore, we have a heuristic from Proposition \ref{proposition8} described as follows, which has been verified practically through numerous experiments.\\

\noindent \textbf{Heuristic}. 
\textit{Let $\mathcal{C}\subseteq\mathbb{F}_q^N$ be an $[N,K]$ random code, where $N=nm$ and $K=km$. For any positive integer $i$, the following equality holds with high probability
\begin{align*}
\dim(\widehat{\mathcal{C}}+\widehat{\mathcal{C}}^{(1)}+\cdots+\widehat{\mathcal{C}}^{(i)})=\min\{N,k(i+1)m\}.
\end{align*}}

With this heuristic and Propositions \ref{proposition7},\ref{proposition8}, we actually build a distinguisher for expanded Gabidulin codes. By definition, the applicable condition for computing the twisted Frobenius power of a code is that both the length and dimension should be multiples of the extension degree. In our implementation, we choose $m=n$ and therefore this applicable condition is satisfied. According to our experimental results, both the public code and its dual in our two proposals behave more like a random code. Therefore, we believe that these two proposals can prevent a potential attack based on this distinguisher.

\subsection{Generic attacks}\label{section5.2}
In this section, we will show that decrypting a valid ciphertext in our proposals can be converted into solving a MinRank instance. To facilitate the description of problems, we introduce an $\mathbb{F}_q$-linear isomorphism $\sigma_n$ from $\mathbb{F}_q^{ns}$ to $\mathcal{M}_{n,s}(\mathbb{F}_q)$. For a vector $\bm{x}=(\bm{x}_1,\ldots,\bm{x}_n)\in\mathbb{F}_q^{ns}$ with $\bm{x}_i\in\mathbb{F}_q^s$, we define $\sigma_n(\bm{x})$ as
\[\sigma_n(\bm{x})=\begin{bmatrix}\bm{x}_1^T,\ldots,\bm{x}_n^T\end{bmatrix}^T\in\mathcal{M}_{n,s}(\mathbb{F}_q).\]
For a set $\mathcal{X}\subseteq\mathbb{F}_q^{ns}$, we define $\sigma_n(\mathcal{X})=\{\sigma_n(\bm{x}):\bm{x}\in\mathcal{X}\}$. For any $\bm{x}\in\mathbb{F}_q^{ns}$, by $w_R(\bm{x})$ we mean the usual rank of $\sigma_n(\bm{x})$ hereafter when no ambiguity arises.\\

\textbf{Reducing Proposal \Rmnum{1} to the MinRank problem.} In Proposal \Rmnum{1}, $K=\lambda n-m(n-k),N=\lambda n$ and $t=\lfloor\frac{n-k}{2}\rfloor$. Let $\bm{y}=\bm{c}+\bm{e}$ be a valid ciphertext in Proposal \Rmnum{1}, where $\bm{c}\in\mathcal{G}_{pub}$ and $\bm{e}\in\mathbb{F}_q^N$ with $w_R(\bm{e})\leqslant t$. Let $M_0=\sigma_n(\bm{y})$, and $M_i=\sigma_n(\bm{m}_i)$ where $\bm{m}_i$ denotes the $i$-th row of $G_{pub}$ for $1\leqslant i\leqslant K$. Then recovering $\bm{e}$ can be reduced to a MinRank instance  of parameters $(q,n,\lambda,K+1,t)$, that is to search for $a_0,a_1,\ldots,a_K\in\mathbb{F}_q$ such that $\rank(\sum_{i=0}^Ka_iM_i)\leqslant t$.\\

\textbf{Reducing Proposal \Rmnum{2} to the MinRank problem.} In Proposal \Rmnum{2}, $K=km,N=nm$ and $t=\lfloor\frac{n-k}{2\lambda}\rfloor$. Let $\bm{y}=\bm{c}+\bm{e}$ be a valid ciphertext in Proposal \Rmnum{2}, where $\bm{c}\in\mathcal{G}_{pub}$ and $\bm{e}\in\mathbb{F}_q^N$ with $w_R(\bm{e})\leqslant \lambda t$. Let $M_0=\sigma_n(\bm{y})$, and $M_i=\sigma_n(\bm{m}_i)$ where $\bm{m}_i$ denotes the $i$-th row of $G_{pub}$ for $1\leqslant i\leqslant K$. Then recovering $\bm{e}$ can be reduced to a MinRank instance of parameters $(q,n,m,K+1,\lambda t)$, that is to search for $a_0,a_1,\ldots,a_K\in\mathbb{F}_q$ such that $\rank(\sum_{i=0}^Ka_iM_i)\leqslant \lambda t$.

\subsubsection{Combinatorial attacks}
Based on the idea of combinatorial attacks on the RSD problem described in Section \ref{section4.1}, we now propose a combinatorial attack to evaluate the practical security our two proposals.\\

%\noindent
\textbf{Proposal \Rmnum{1}.} 
In the case of $n>\lambda$, let $\mathcal{E}=\langle E\rangle_{\mathbb{F}_q}$ and $\mathcal{V}\subseteq\mathbb{F}_q^\lambda$ be an $\mathbb{F}_q$-space of dimension $t'\geqslant t$. If $\mathcal{E}\subseteq\mathcal{V}$, then one can express $E$ in a basis of $\mathcal{V}$ over $\mathbb{F}_q$. With the same analysis as Case $1$ in Section \ref{section4.1}, one can obtain a linear system of $m(n-k)$ equations with $nt'$ variables in $\mathbb{F}_q$. Let $m(n-k)\geqslant nt'$, then $t'\leqslant m-\lceil\frac{km}{n}\rceil$. By taking $t'=m-\left\lceil\frac{km}{n}\right\rceil$, one gets a complexity of $\mathcal{O}\big(m^3(n-k)^3q^{t(\lambda-m+\lceil\frac{km}{n}\rceil)}\big)$. 

In the case of $n\leqslant\lambda$, let $\mathcal{E}=\langle E^T\rangle_{\mathbb{F}_q}$ and $\mathcal{V}\subseteq\mathbb{F}_q^n$ be an $\mathbb{F}_q$-space of dimension $t'\geqslant t$. If $\mathcal{E}\subseteq \mathcal{V}$, then one can express $E$ in a basis of $\mathcal{V}$ over $\mathbb{F}_q$. With the same analysis as Case $2$ in Section \ref{section4.1}, one can obtain a linear system of $m(n-k)$ equations with $\lambda t'$ variables in $\mathbb{F}_q$. Let $m(n-k)\geqslant \lambda t'$, then $t'\leqslant \lfloor\frac{m(n-k)}{\lambda}\rfloor$. By taking $t'=\lfloor\frac{m(n-k)}{\lambda}\rfloor$, one gets a complexity of $\mathcal{O}\big(m^3(n-k)^3q^{t(n-\lfloor\frac{m(n-k)}{\lambda}\rfloor)}\big)$.\\

%\noindent
\textbf{Proposal \Rmnum{2}.} 
On the one hand, $\mathcal{E}=\langle \bm{e}_1,\ldots,\bm{e}_n\rangle_{\mathbb{F}_q}$ has dimension at most $\lambda t$. Because of $n\leqslant m$, with the same analysis as Case $2$ in Section \ref{section4.1}, one gets a complexity of $\mathcal{O}\big(m^3(n-k)^3q^{\lambda tk}\big)$.

On the other hand, $\bm{e}$ is obtained from $E\in\mathcal{M}_{u_c, m\lambda}(\mathbb{F}_q)$ with $\rank(E)=t$. For $\lambda\geqslant 2$, it is clear that $u_c=\lceil\frac{n}{\lambda}\rceil< n< m\lambda$. Let $\mathcal{E}=\langle E^T\rangle_{\mathbb{F}_q}$ and $\mathcal{V}\subseteq\mathbb{F}_q^{u_c}$ be an $\mathbb{F}_q$-space of dimension $t'\geqslant t$. If $\mathcal{E}\subseteq\mathcal{V}$, then one can express $E$ in a basis of $\mathcal{V}$ over $\mathbb{F}_q$. With the same analysis as Case $2$ in Section \ref{section4.1}, one can obtain a linear system of $m(n-k)$ equations and $m\lambda t'$ variables in $\mathbb{F}_q$. Let $m(n-k)\geqslant m\lambda t'$, then $t'\leqslant \lfloor\frac{n-k}{\lambda}\rfloor$. By taking $t'=\lfloor\frac{n-k}{\lambda}\rfloor$, one gets a complexity of $\mathcal{O}(m^3(n-k)^3q^{t(u_c-\lfloor\frac{n-k}{\lambda}\rfloor)})$.

\subsubsection{Algebraic attacks}
Having reduced the proposals to a MinRank instance, one can directly apply the algebraic attacks described in Table \ref{table1} to evaluate the practical security.

\section{Parameters and performance}\label{section7}
In this section, we compute the public key size and information rate of the proposed cryptosystems for the security of 128 bits, 192 bits and 256 bits against the generic attacks described in Section \ref{section5.2}. After that we will make a comparison on public key size with some other code-based cryptosystems.

In Proposal \Rmnum{1}, the public key is a systematic generator matrix of a $[\lambda n,\lambda n-mr]$ code where $r=n-k$, resulting in a public key size of $mr(n\lambda-mr)\cdot log_2(q)$ bits. In Proposal \Rmnum{2}, the public key is a systematic generator matrix of an $[mn,mk]$ code, resulting in a public key size of $rkm^2\cdot log_2(q)$ bits. As for the information rate, this value is evaluated as $(n\lambda-mr)/n\lambda$ for Proposal \Rmnum{1}, and $k/n$ for Proposal \Rmnum{2} respectively.

\begin{table}[h!]
\setlength{\abovecaptionskip}{-0.2cm}
\setlength{\belowcaptionskip}{-0.2cm}
\begin{center}
\begin{tabular}{|c|ccccc|r|c|c|}
\hline      
\multirowcell{2}{\makecell*[c]{Instance}} & \multicolumn{5}{c|}{\makecell*[c]{Parameters}} & \multirowcell{2}{\makecell*[c]{Key Size}} & \multirowcell{2}{\makecell*[c]{Rate}} & \multirowcell{2}{\makecell*[c]{Security}}\\
\cline{2-6}
&\makecell*[c]{$q$}&$m$&$n$&$k$&$\lambda$&&&\\
\hline
\multirowcell{3}{Proposal \Rmnum{1}}
&2&31&31&19&29&24506&0.59&128\\
&2&38&38&20&36&58482&0.50&192\\
&2&45&45&25&43&116438&0.53&256\\
\hline
\multirowcell{3}{Proposal \Rmnum{2}}
&2&56&56&28&2&307328&0.49&128\\
&2&72&72&32&2&829440&0.44&192\\
&2&84&84&40&2&1552320&0.48&256\\
\hline
\end{tabular}
\end{center}
\caption{Public key size (in bytes) and information rate.}\label{table2}
\end{table}

\begin{table}[h!]
\setlength{\abovecaptionskip}{-0.2cm}
\setlength{\belowcaptionskip}{-0.2cm}
\begin{center}
\begin{tabular}{|c|ccccc|r|c|c|}
\hline      
\multirowcell{2}{\makecell*[c]{Instance}} & \multicolumn{5}{c|}{\makecell*[c]{Parameters}} & \multirowcell{2}{\makecell*[c]{Key Size}} & \multirowcell{2}{\makecell*[c]{Rate}} & \multirowcell{2}{\makecell*[c]{Security}}\\
\cline{2-6}
&\makecell*[c]{$q$}&$m$&$n$&$k$&$\lambda$&&&\\
\hline
\multirowcell{3}{Proposal \Rmnum{1}}
&7&20&20&12&18&11230&0.56&128\\
&7&24&24&14&22&24256&0.55&192\\
&7&28&28&16&26&46221&0.54&256\\
\hline
\multirowcell{3}{Proposal \Rmnum{2}}
&7&35&35&23&2&118646&0.66&128\\
&7&45&45&29&2&329724&0.64&192\\
&7&51&51&31&2&565900&0.61&256\\
\hline
\end{tabular}
\end{center}
\caption{Public key size (in bytes) and information rate.}\label{table3}
\end{table}

\begin{table}[h!]
\setlength{\abovecaptionskip}{-0.2cm}
\setlength{\belowcaptionskip}{-0.2cm}
\begin{center}
\begin{tabular}{|c|ccccc|r|c|c|}
\hline      
\multirowcell{2}{\makecell*[c]{Instance}} & \multicolumn{5}{c|}{\makecell*[c]{Parameters}} & \multirowcell{2}{\makecell*[c]{Key Size}} & \multirowcell{2}{\makecell*[c]{Rate}} & \multirowcell{2}{\makecell*[c]{Security}}\\
\cline{2-6}
&\makecell*[c]{$q$}&$m$&$n$&$k$&$\lambda$&&&\\
\hline
\multirowcell{3}{Proposal \Rmnum{1}}
&13&18&18&12&16&8993&0.63&128\\
&13&21&21&11&19&18359&0.47&192\\
&13&25&25&15&23&37583&0.57&256\\
\hline
\multirowcell{3}{Proposal \Rmnum{2}}
&13&29&29&17&2&79358&0.59&128\\
&13&37&37&21&2&212768&0.57&192\\
&13&43&43&23&2&393422&0.53&256\\
\hline
\end{tabular}
\end{center}
\caption{Public key size (in bytes) and information rate.}\label{table4}
\end{table}

We investigate the performance of the proposed cryptosystems in three cases, as summarized in Tables \ref{table2}, \ref{table3} and \ref{table4}. Taking the public key size as a more important issue, we suggest the parameter sets in Table \ref{table4} for the corresponding security level. With such a choice of parameter sets, we make a comparison on public key size with some other code-based cryptosystems in Table \ref{table5}.

\begin{table}[h!]
\setlength{\abovecaptionskip}{-0.2cm}
\setlength{\belowcaptionskip}{-0.2cm}
\begin{center}
\begin{tabular}{|c|r|r|r|}
\hline
\backslashbox{Instance}{Security} & \makecell*[c]{128} & \makecell*[c]{192} & \makecell*[c]{256}\\
\hline\rule{0pt}{10pt}
\makecell*[c]{Classic McEliece \cite{daniel2020classic}} & 261120 & 524160 & 1044992\\
\hline\rule{0pt}{10pt}
\makecell*[c]{BGR \cite{berger2017gabidulin}} & 11571 &  & 47104\\
\hline\rule{0pt}{10pt}
\makecell*[c]{HQC \cite{melchor2020hammi}} & 2249 & 4522 & 7245\\
\hline\rule{0pt}{10pt}
\makecell*[c]{BIKE \cite{aragon2020bikeb}} & 1540 & 3082 & 5121\\
\hline\rule{0pt}{10pt}
\makecell*[c]{Proposal \Rmnum{1}} & 8993 & 18359 & 37583\\
\hline\rule{0pt}{10pt}
\makecell*[c]{Proposal \Rmnum{2}} & 79358 & 212768 & 393422\\
\hline
\end{tabular}
\end{center}
\caption{Comparison on public key size (in bytes).}\label{table5} 
\end{table}

\section{Conclusion}\label{section8}
This paper presents two cryptosystems by using expanded Gabidulin codes in the McEliece setting. In our proposals, the underlying expanded Gabidulin code is divided into $n$ blocks, with each block having size $m$. By definition, each block corresponds to one component of the parent Gabidulin code. To destroy this correspondence, in Proposal \Rmnum{1} we first shorten the expanded Gabidulin code, then perform a column-mixing transformation to each block. In Proposal \Rmnum{2}, we adopt a rather different column-mixing transformation to the underlying code by mixing $\lambda$ neighbouring blocks. The greatest advantage of using expanded Gabidulin codes is that all existing structural attacks based on the Frobenius map no longer make sense. Additionally, an effective distinguisher is introduced for expanded Gabidulin codes, and the public code in these two proposals seems indistinguishable from random codes under this distinguisher. Furthermore, our proposals have a clear advantage in public key representation over some other code-based cryptosystems. For instance, we have reduced the public key size by around 96\% compared to Classic McEliece entering the third round of the NIST PQC project.

\begin{acknowledgements}
This research is supported by the National Key Research and Development Program of China (Grant No. 2018YFA0704703), the National Natural Science Foundation of China (Grant No. 61971243), the Natural Science Foundation of Tianjin (20JCZDJC00610), and the Fundamental Research Funds for the Central Universities of China (Nankai University).
\end{acknowledgements}

%\bibliographystyle{unsrt}
%\bibliography{reference}
\end{document}